\providecommand{\U}[1]{\protect\rule{.1in}{.1in}}
\newtheorem{theorem}{Theorem}[section]
\newtheorem{corollary}[theorem]{Corollary}
\newtheorem{remark}[theorem]{Remark}
\newtheorem{lemma}[theorem]{Lemma}
\numberwithin{equation}{section}
\begin{document}
\title[KdV equation]{The effect of a positive bound state on the KdV solution. A case study.}
\author{Alexei Rybkin}
\address{Department of Mathematics and Statistics, University of Alaska Fairbanks, PO
Box 756660, Fairbanks, AK 99775}
\email{arybkin@alaska.edu}
\thanks{The author is supported in part by the NSF grant DMS 1716975.}
\date{November, 2018}
\subjclass{34B20, 37K15, 47B35}
\keywords{KdV equation, embedded eigenvalues, Wigner-von Neumann potentials}

\begin{abstract}
We consider a slowly decaying oscillatory potential such that the
corresponding 1D Schr\"{o}dinger operator has a positive eigenvalue embedded
into the absolutely continuous spectrum. This potential does not fall into a
known class of initial data for which the Cauchy problem for the Korteweg-de
Vries (KdV)\ equation can be solved by the inverse scattering transform. We
nevertheless show that the KdV equation with our potential does admit a closed
form classical solution in terms of Hankel operators. Comparing with rapidly
decaying initial data our solution gains a new term responsible for the
positive eigenvalue. To some extend this term resembles a positon (singular)
solution but remains bounded. Our approach is based upon certain limiting
arguments and techniques of Hankel operators.

\end{abstract}
\maketitle
\tableofcontents

\section{Introduction}

We are concerned with the initial value problem for the Korteweg-de Vries
(KdV) equation%
\begin{equation}%
\begin{array}
[c]{cc}%
\partial_{t}u-6u\partial_{x}u+\partial_{x}^{3}u=0,\ \ \  & -\infty
<x<\infty,\ t\geq0,\\
u\left(  x,0\right)  =q\left(  x\right)  . &
\end{array}
\label{KdV}%
\end{equation}
As is a well-known, for smooth rapidly decaying $q$'s (\ref{KdV}) was solved
in closed form in the short 1967 paper \cite{GGKM67} by
Gardner-Greene-Kruskal-Miura (GGKM). This seminal paper introduces what we now
call the\emph{ inverse scattering transform }(IST). Conceptually, it is
similar to the Fourier transform (see e.g. the classical books \cite{AC91},
\cite{NPZ}) but based on the inverse scattering theory  for the
Schr\"{o}dinger operator%
\begin{equation}
\mathbb{L}_{q}=-\partial_{x}^{2}+q(x)\text{ on }L^{2}\left(  \mathbb{R}%
\right)  .\label{schrodinger}%
\end{equation}
Moreover, the solution $q\left(  x,t\right)  $ to (\ref{KdV}) for each $t>0$
can be obtained by the formula%
\begin{equation}
u\left(  x,t\right)  =-2\partial_{x}^{2}\log\tau\left(  x,t\right)
,\label{tau}%
\end{equation}
where $\tau$ is the so-called \emph{Hirota tau-function} introduced in
\cite{Hirota71} which admits an explicit representation in terms of the
scattering data of the pair $\left(  \mathbb{L}_{q},\mathbb{L}_{0}\right)  $.
The solution has a relatively simple and by now well understood wave structure
of running (finitely many) solitons accompanied by radiation of decaying waves
(see e.g. Grunert-Teschl \cite{GT09} for a streamlined modern exposition). In
about 1973, the IST was extended to $q$'s rapidly approaching different
constants $q_{\pm}$ as $x\rightarrow\pm\infty$ (step initial profile). It
appeared first in the physical literature \cite{Gurevich73} and was rigorously
treated in 1976 by Hruslov\footnote{Also transcripted as Khruslov.}
\cite{Hruslov76}. The formula (\ref{tau}) is also available in this case with
an explicit representation of the \ tau-function in terms of certain
scattering data. We refer to our recent \cite{GruRybSIMA15} and
\cite{RybPhysD2018} where (\ref{tau}) is extended to essentially arbitrary
$q$'s with a rapid decay only at $+\infty$. The main feature of such initial
profiles is infinite sequence of solitons emitted by the initial step. Note
that a complete rigorous investigation of all other asymptotic regimes and
their generalizations was done only recently by Teschl with his collaborators
(see e.g. \cite{TeschlRarefaction16}, \cite{Egorovaetal13},
\cite{TeschlShock2016}).

Another equally important and explicitly solvable case is when $q$ is
periodic. The periodic IST\ is quite different from the GGKM\ one and is
actually the \emph{inverse spectral transform} (also abbreviated as IST) since
it relies on the Floquet theory for $\mathbb{L}_{q}$ and analysis of Riemann
surfaces and hence is much more complex than the rapidly decaying case. The
solution $u\left(  x,t\right)  $ is given essentially by the same formula
(\ref{tau}), frequently referred to as the \emph{Its-Matveev formula}
\cite{ItsMat75} (see also \cite{DubMatNov76} by Dubrovin-Matveev-Novikov and
the 2003 Gesztesy-Holden book \cite{GesHold03} where a complete history is
given), but $\tau$ is a multidimensional\footnote{Infinite dimensional in
general.} theta-function of real hyperelliptic algebraic curves explicitly
computed in terms of spectral data of the associated Dirichlet problem for
$\mathbb{L}_{q}$. It is therefore very different from the rapidly decaying
case. The main feature of a periodic solution is its quasi-periodicity in time
$t$.

We have outlined two main classes of initial data $q$ in (\ref{KdV}) for which
a suitable form of the IST\ was found during the initial boom followed by
\cite{GGKM67}. Such progress was possible due to well-developed inverse
scattering/spectral theories for the underlying potentials $q$. However, while
we have proven \cite{GruRybSIMA15} that no decay at $-\infty$ is required to
do the IST but slower than $x^{-2}$ decay at $+\infty$ results in serious
complications. The main issue here is that the classical inverse scattering
theory, the foundation for the IST, has not been extended beyond short-range
potentials, i.e. $q\left(  x\right)  =O\left(  \left\vert x\right\vert
^{-2-\varepsilon}\right)  $, $x\rightarrow\pm\infty$. We emphasize that during
the boom in scattering theory there was a number of results on (direct)
scattering/spectral theory for a variety of long-range potentials but the
inverse scattering theory is a different matter. It was shown in 1982
\cite{ADM81} that the short-range scattering data no longer determine the
potential uniquely even in the case when  $q\left(  x\right)  =O\left(
x^{-2}\right)  $ and it is not merely a technical issue of adding some extra
data. The problem appears to be open even for $L^{1}$ potentials (see
Aktuson-Klaus \cite{AK01}) for which all scattering quantities are
well-defined but may exhibit an erratic behavior at zero energy which is
notoriously difficult to analyze and classify. Besides, a possible infinite
negative spectrum begets an infinite sequence of norming constants which can
be arbitrary. Consequently, it is even unclear how to state a (well-posed)
Riemann-Hilbert problem which would solve the inverse scattering problem. Once
we leave $L^{1}$ then infinite embedded singular spectrum may appear leaving
no hope to figure out what true scattering data might be. We note that any
attempt to try the inverse spectral transform instead runs into equally
difficult problems (see, e.g. our \cite{RybJMP08} and the literature cited
therein) as spectral data evolve in time under the KdV flows by a simple law
essentially only for the so-called finite gap potentials. In addition, it
makes sense to find a suitable IST for (\ref{KdV}) if (\ref{KdV}) is actually
well-posed. The seminal 1993 Bourgain's paper \cite{Bourgain93} says that
(\ref{KdV}) is well-posed if $q$ is in $L^{2}$ and not much better result
should be expected regarding the decay at $+\infty$.

In the current paper we look into a specific representative of the important
class of continuous potentials asymptotically behaving like
\begin{equation}
q\left(  x\right)  =\left(  c/x\right)  \sin2x+O\left(  x^{-2}\right)
,\ \ \ x\rightarrow\pm\infty.\label{WvN pots}%
\end{equation}
In the half line context such potentials\footnote{In fact, for 3D radially
symmetric potentials.} first appeared in 1929 in the famous paper
\cite{WvN1929} by Wigner-von Neumann where they explicitly constructed a
potential of type (\ref{WvN pots}) with $c=-8$ which supports bound state $+1$
embedded in the absolutely continuous spectrum. Note that in general any $q$
of type (\ref{WvN pots}) with $\left\vert c\right\vert >2$ may support a bound
state $+1$ which is extremely unstable and turns into the so-called
\emph{Wigner-von Neumann resonance} under a small perturbation. If $\left\vert
c\right\vert >1/\sqrt{2}$ then the negative spectrum (necessarily discrete) of
$\mathbb{L}_{q}$ is infinite in general \cite{Klaus82}. While there is a very
extensive literature on potentials of type (\ref{WvN pots}) (commonly referred
to as Wigner-von Neumann type potentials) but, as Matveev puts it in
\cite{MatveevOpenProblems},\ "The related inverse scattering problem is not
yet solved and the study of the related large times evolution is a very
challenging problem". Observe that since any Wigner-von Neumann potential is
clearly in $L^{2}$, the Bourgain Theorem \cite{Bourgain93} guarantees
well-posedness of (\ref{KdV}) and the good open problem is if we can solve it
by a suitable IST. Our goal here is to investigate a specific case of
(\ref{WvN pots}) which can be done by the IST. Namely, we consider an even
potential $Q\left(  x\right)  $ defined for $x\geq0$ by%
\[
Q\left(  x\right)  =-2\frac{d^{2}}{dx^{2}}\log\left(  1+\rho x-\frac{\rho}%
{2}\sin2x\right)  ,
\]
where $\rho$ is an arbitrary positive constant. One can easily check that
$Q\,$\ is continuos and behaves like (\ref{WvN pots}) with $c=-4$. The main
feature of $Q$ is that $\mathbb{L}_{Q}$ admits an explicit spectral analysis
and consequently the scattering problem for the pair $\left(  \mathbb{L}%
_{Q},\mathbb{L}_{0}\right)  $ can also be solved explicitly. In particular,
$+1$ is a positive bound state of $\mathbb{L}_{Q}$ but its negative spectrum
consists of just one bound state. We show that for (\ref{KdV}) with initial
data $Q$ the tau-function in (\ref{tau}) can be explicitly calculated. The
formula however is expressed in the language of Hankel operators (which is not
commonly used in integrable systems) and we have to postpone it till Section
\ref{main results}. We only mention here that, comparing to the short range
case, the tau-function $\tau$ gains an extra factor responsible for the
positive bound state. Unfortunately, we were unable to find the IST even in
this case but we able to detour it by means of suitable limiting arguments.
Our limiting arguments are based on certain short range approximations of $Q$
combined with techniques of Hankel operators developed in our
\cite{GruRybSIMA15}.

The reader will see that our approach is not restricted to just one initial
condition and should work for a whole class of initial data (at least
\cite{NovikovKhenkin84} gives some hopes). We however do not make an attempt
to be more general for two reasons. First of call, our consideration would
complicate a great deal due to numerous extra technicalities. But the main
reason is that the scattering theory, the backbone of our approach, is not
developed well enough outside of short-range potentials. (At least not to our
satisfaction). For instance, there are only some results on regularity
properties of scattering data for Wigner-von Neumann type potentials (see
\cite{Klaus91}) but almost nothing is known about their small energy behavior.
The latter was posed as an open question in \cite{Klaus91} but, to the best of
our knowledge, there has been no progress in this direction since then. This
is a major impediment to our approach as it requires a careful control of the
scattering matrix at all energy regimes.

\section{Our analytic tools\label{Hankel}}

To translate our problem into the language of Hankel operators some common
definitions and facts are in order \cite{Nik2002}, \cite{Peller2003}.

\subsection{Riesz projections}

Recall, that a function $f$ analytic in the upper half plane $\mathbb{C}^{\pm
}:=\left\{  z|\pm\operatorname{Im}z>0\right\}  $ is in the \emph{Hardy space}
$H_{\pm}^{2}$ of $\mathbb{C}^{\pm}$ if%
\[
\sup_{h>0}\int_{\mathbb{R}\pm ih}\left\vert f\left(  z\right)  \right\vert
^{2}\left\vert dz\right\vert <\infty.
\]

It is a fundamental fact of the theory of Hardy spaces that any $f\in H_{\pm
}^{2}$ has non-tangential boundary values $f\left(  x\pm i0\right)  $ for
almost every (a.e.) $x\in\mathbb{R}$ and $H_{\pm}^{2}$ are subspaces of
$L^{2}:=L^{2}\left(  \mathbb{R}\right)  $. Thus, $H_{\pm}^{2}$ are Hilbert
spaces with the inner product induced from $L^{2}$:
\[
\langle f,g\rangle_{H_{\pm}^{2}}=\langle f,g\rangle_{L^{2}}=\left\langle
f,g\right\rangle =\frac{1}{2\pi}\int_{-\infty}^{\infty}f\left(  x\right)
\bar{g}\left(  x\right)  dx.
\]
It is well-known that $L^{2}=H_{+}^{2}\oplus H_{-}^{2},$ the \emph{orthogonal
(Riesz) projection} $\mathbb{P}_{\pm}$ onto $H_{\pm}^{2}$ being given by%
\begin{align}
(\mathbb{P}_{\pm}f)(x) &  =\pm\frac{1}{2\pi i}\lim_{\varepsilon\rightarrow
0+}\int_{-\infty}^{\infty}\frac{f(s)ds}{s-(x\pm i\varepsilon)}%
\label{Riesz proj}\\
&  =\pm\frac{1}{2\pi i}\int_{-\infty}^{\infty}\frac{f(s)ds}{s-(x\pm
i0)}.\nonumber
\end{align}
In what follows, we set $H_{+}^{2}=H^{2}$. Notice that for any $f\in H^{2}$%
\begin{equation}
\mathbb{P}_{-}\left(  \frac{1}{\cdot-\lambda}f\right)  =\frac{1}{\cdot
-\lambda}f(\lambda),\ \ \ \lambda\in\mathbb{C}^{+}.\label{P_}%
\end{equation}
Besides $H_{\pm}^{2}$, we will also use $H_{\pm}^{\infty}$, the algebra of
uniformly bounded in $\mathbb{C}^{\pm}$ functions.

\subsection{Reproducing kernels\label{rep kernels}}

Recall that, a given fixed $\lambda\in\mathbb{C}^{\pm}$ the function%
\begin{equation}
k_{\lambda}\left(  z\right)  :=\frac{i}{z-\overline{\lambda}},\ \ \ \lambda
\in\mathbb{C}^{\pm}\label{repro ker}%
\end{equation}
is called the\emph{ reproducing (or Cauchy-Szego) kernel} for $H_{\pm}^{2}$.
Clearly,%
\begin{equation}
\left\Vert k_{\lambda}\right\Vert =\sqrt{\langle k_{\lambda},k_{\lambda
}\rangle}=\frac{1}{\sqrt{2\operatorname{Im}\lambda}}\label{norm}%
\end{equation}
and hence $k_{\lambda}\in H_{\pm}^{2}$ if $\lambda\in\mathbb{C}^{\pm}$. The
main reason why reproducing kernels are convenient is the following%
\begin{subequations}
\begin{align}
f &  \in H^{2},\lambda\in\mathbb{C}^{+}\Longrightarrow f\left(  \lambda
\right)  =\langle f,k_{\lambda}\rangle\text{ (Cauchy's formula)}%
\label{form for rep ker}\\
f &  \in L^{2},\lambda\in\mathbb{R}\Longrightarrow(\mathbb{P}_{\pm}f)\left(
\lambda\right)  =\pm\langle f,k_{\lambda\pm i0}\rangle.\label{Riesz}%
\end{align}
Let $B$ be a \emph{Blaschke product} with finitely\footnote{It can also be
infinite but it doesn't concern us.} many simple zeros $z_{n}\in\mathbb{C}%
^{+}$, i.e.,%
\end{subequations}
\[
B\left(  z\right)  =\prod_{n}b_{n}\left(  z\right)  ,b_{n}\left(  z\right)
=\frac{z-z_{n}}{z-\overline{z_{n}}}.
\]
Introduce%
\[
K_{B}=\operatorname*{span}\left\{  k_{z_{n}}\right\}  .
\]
It is an easy but nevertheless fundamentally important fact in interpolation
of analytic functions, the study of the shift operator, so-called model
operators, etc. that%
\begin{equation}
K_{B}=H^{2}\ominus BH^{2},\text{ where }\ BH^{2}:=\left\{  Bf:f\in
H^{2}\right\}  .\label{K}%
\end{equation}

\begin{lemma}
\label{Blaschke lemma}The orthogonal projections $\mathbb{P}_{B}$ of $H^{2}$
onto $K_{B}$ and $\mathbb{P}_{B}^{\bot}=I-\mathbb{P}_{B}$ are given by%
\begin{equation}
\mathbb{P}_{B}=B\mathbb{P}_{-}\overline{B},\ \ \mathbb{P}_{B}^{\bot
}=B\mathbb{P}_{+}\overline{B}. \label{PB}%
\end{equation}
Furthermore, if $A$ is a linear bounded operator in $H^{2}$ then the matrix of
$\mathbb{P}_{B}A\mathbb{P}_{B}$ with respect to $\left(  k_{z_{n}}\right)  $
is given by%
\begin{equation}
\left(  \mathbb{P}_{B}AP_{B}\right)  _{mn}=\left\langle Ak_{z_{n}},k_{z_{m}%
}^{\bot}\right\rangle , \label{mn}%
\end{equation}
where%
\begin{equation}
k_{z_{n}}^{\bot}\left(  z\right)  :=\frac{2\operatorname{Im}z_{n}}%
{B_{n}\left(  z_{n}\right)  }B_{n}\left(  z\right)  k_{z_{n}}\left(  z\right)
,\ \ \ B_{n}:=B/b_{n} \label{bi orth}%
\end{equation}
form a bi-orthogonal basis for $\left(  k_{z_{n}}\right)  $. I.e.,
$\left\langle k_{z_{n}}^{\bot},k_{z_{m}}\right\rangle =\delta_{nm}$.
\end{lemma}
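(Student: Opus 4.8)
The plan is to establish the two projection formulas in \eqref{PB} first, and then deduce the bi-orthogonality and the matrix representation \eqref{mn} from them together with the reproducing property \eqref{form for rep ker}. For the projection formulas, the key structural facts are that $B$ is inner, so $\left\vert B\right\vert =1$ a.e.\ on $\mathbb{R}$ (whence multiplication by $B$ is unitary on $L^{2}$ and $B\overline{B}=1$ on $\mathbb{R}$), and that $BH^{2}\subseteq H^{2}$ since $B\in H_{+}^{\infty}$. Given $f\in H^{2}$, I would split $\overline{B}f\in L^{2}$ using $L^{2}=H_{+}^{2}\oplus H_{-}^{2}$ and multiply back by $B$ to get
\[
f=B\overline{B}f=B\mathbb{P}_{+}(\overline{B}f)+B\mathbb{P}_{-}(\overline{B}f).
\]
The first summand lies in $BH^{2}$ because $\mathbb{P}_{+}(\overline{B}f)\in H^{2}$ and $H_{+}^{\infty}\cdot H^{2}\subseteq H^{2}$; consequently the second summand $B\mathbb{P}_{-}(\overline{B}f)=f-B\mathbb{P}_{+}(\overline{B}f)$ also lies in $H^{2}$. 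To see that it is orthogonal to $BH^{2}$, I would use the unitarity of multiplication by $B$: for any $g\in H^{2}$,
\[
\left\langle B\mathbb{P}_{-}(\overline{B}f),Bg\right\rangle =\left\langle \mathbb{P}_{-}(\overline{B}f),g\right\rangle =0,
\]
since $\mathbb{P}_{-}(\overline{B}f)\in H_{-}^{2}\perp H^{2}$. Hence $B\mathbb{P}_{-}(\overline{B}f)\in H^{2}\ominus BH^{2}=K_{B}$ by \eqref{K}, and the displayed splitting is exactly the orthogonal decomposition of $f$ along $K_{B}\oplus BH^{2}$. This identifies $\mathbb{P}_{B}=B\mathbb{P}_{-}\overline{B}$ and $\mathbb{P}_{B}^{\bot}=B\mathbb{P}_{+}\overline{B}$, proving \eqref{PB}.

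For the bi-orthogonality, I first observe that $k_{z_{n}}^{\bot}$ is, up to the constant $2\operatorname{Im}z_{n}/B_{n}(z_{n})$, the product $B_{n}k_{z_{n}}$; writing it as $i\prod_{k\neq n}(z-z_{k})/\prod_{k}(z-\overline{z_{k}})$ shows by partial fractions that it is a linear combination of the $k_{z_{k}}$, hence $k_{z_{n}}^{\bot}\in K_{B}\subseteq H^{2}$. I can therefore apply Cauchy's formula \eqref{form for rep ker} to evaluate $\langle k_{z_{n}}^{\bot},k_{z_{m}}\rangle =k_{z_{n}}^{\bot}(z_{m})$. On the diagonal, $B_{n}(z_{n})$ cancels and $k_{z_{n}}(z_{n})=1/(2\operatorname{Im}z_{n})$ yields $1$; off the diagonal, the factor $b_{m}$ is present in $B_{n}=\prod_{k\neq n}b_{k}$ and $b_{m}(z_{m})=0$, so $B_{n}(z_{m})=0$ and the pairing vanishes. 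This gives $\langle k_{z_{n}}^{\bot},k_{z_{m}}\rangle =\delta_{nm}$.

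Finally, for \eqref{mn}, I would use that $k_{z_{n}}\in K_{B}$ forces $\mathbb{P}_{B}k_{z_{n}}=k_{z_{n}}$, and that $k_{z_{m}}^{\bot}\in K_{B}$ together with the self-adjointness of the orthogonal projection $\mathbb{P}_{B}$ gives $\langle \mathbb{P}_{B}A\mathbb{P}_{B}k_{z_{n}},k_{z_{m}}^{\bot}\rangle =\langle Ak_{z_{n}},\mathbb{P}_{B}k_{z_{m}}^{\bot}\rangle =\langle Ak_{z_{n}},k_{z_{m}}^{\bot}\rangle$. Expanding $\mathbb{P}_{B}A\mathbb{P}_{B}k_{z_{n}}=\sum_{j}(\mathbb{P}_{B}A\mathbb{P}_{B})_{jn}k_{z_{j}}$ in the basis $(k_{z_{j}})$ and pairing against $k_{z_{m}}^{\bot}$ picks out the $(m,n)$ entry by the bi-orthogonality just proved, which is exactly \eqref{mn}. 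I expect the only genuinely delicate point to be the first step --- specifically, verifying rigorously that $B\mathbb{P}_{-}(\overline{B}f)$ remains in $H^{2}$ and is orthogonal to $BH^{2}$, which is where the inner-function properties $\left\vert B\right\vert =1$ on $\mathbb{R}$ and $BH^{2}\subseteq H^{2}$ do the real work; the remaining computations are routine residue and partial-fraction bookkeeping.
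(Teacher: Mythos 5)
Your proof is correct, but it is organized differently from the paper's. The paper does not prove the projection formulas \eqref{PB} at all --- it simply cites them from Nikolski's book \cite{NikBook86} --- and then uses them to compute $\mathbb{P}_{B}$ explicitly: applying the residue theorem to $\mathbb{P}_{-}\overline{B}f$ and Cauchy's formula \eqref{form for rep ker}, it derives the kernel representation $\mathbb{P}_{B}f=\sum_{n}\langle f,k_{z_{n}}\rangle k_{z_{n}}^{\bot}$, from which \eqref{mn} is read off after checking bi-orthogonality. You instead prove \eqref{PB} from scratch (the standard argument: $f=B\mathbb{P}_{+}(\overline{B}f)+B\mathbb{P}_{-}(\overline{B}f)$, with the second summand in $H^{2}$ and orthogonal to $BH^{2}$ by unimodularity of $B$ on $\mathbb{R}$, hence in $K_{B}$ by \eqref{K}), and you obtain \eqref{mn} purely abstractly --- from $\mathbb{P}_{B}k_{z_{n}}=k_{z_{n}}$, $\mathbb{P}_{B}k_{z_{m}}^{\bot}=k_{z_{m}}^{\bot}$ (justified via partial fractions), self-adjointness of $\mathbb{P}_{B}$, and the bi-orthogonality relation --- so in your write-up \eqref{mn} does not actually depend on \eqref{PB}, only on $\mathbb{P}_{B}$ being the orthogonal projection onto $K_{B}$; the bi-orthogonality verification itself (Cauchy's formula, $B_{n}(z_{m})=0$ off the diagonal, the norm identity \eqref{norm} on the diagonal) coincides with the paper's. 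The trade-off: your route is more self-contained and avoids residue calculus in the derivation of \eqref{mn}, while the paper's residue computation yields as a by-product the explicit expansion $\mathbb{P}_{B}=\sum_{n}\langle\cdot,k_{z_{n}}\rangle k_{z_{n}}^{\bot}$, which is in the same style as the kernel representations (e.g.\ \eqref{H(small fi)}) that the paper relies on later.
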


\begin{proof}
(\ref{PB}) are proven in \cite{NikBook86}. To show (\ref{mn}) we first
explicitly evaluate $\mathbb{P}_{B}$. By (\ref{Riesz proj}) for $f\in H^{2}$
we have%
\[
\mathbb{P}_{-}\overline{B}f=-\frac{1}{2\pi i}\int_{-\infty}^{\infty}%
\frac{f\left(  s\right)  }{B\left(  s\right)  }\frac{ds}{s-\left(
x-i0\right)  }%
\]
and by residues%
\begin{align*}
\left(  \mathbb{P}_{-}\overline{B}f\right)  \left(  x\right)   &  =-%
%TCIMACRO{\dsum _{n}}%
%BeginExpansion
{\displaystyle\sum_{n}}
%EndExpansion
\operatorname*{Res}\left(  \frac{f\left(  z\right)  /B\left(  z\right)  }%
{z-x},z_{n}\right)  \ \\
&  =%
%TCIMACRO{\dsum _{n}}%
%BeginExpansion
{\displaystyle\sum_{n}}
%EndExpansion
\frac{2i\operatorname{Im}z_{n}}{B_{n}\left(  z_{n}\right)  }\frac{f\left(
z_{n}\right)  }{x-z_{n}}\\
&  =%
%TCIMACRO{\dsum _{n}}%
%BeginExpansion
{\displaystyle\sum_{n}}
%EndExpansion
\frac{2i\operatorname{Im}z_{n}}{B_{n}\left(  z_{n}\right)  }\frac{\left\langle
f,k_{z_{n}}\right\rangle }{x-z_{n}}\text{ (by (\ref{form for rep ker})).}%
\end{align*}
Hence, by (\ref{PB}),%
\begin{align*}
\mathbb{P}_{B}f &  =%
%TCIMACRO{\dsum _{n}}%
%BeginExpansion
{\displaystyle\sum_{n}}
%EndExpansion
\ \left\langle f,k_{z_{n}}\right\rangle \ \frac{2i\operatorname{Im}z_{n}%
}{B_{n}\left(  z_{n}\right)  }B_{n}\ \frac{1}{\cdot-\overline{z}_{n}}\\
&  =%
%TCIMACRO{\dsum _{n}}%
%BeginExpansion
{\displaystyle\sum_{n}}
%EndExpansion
\ \left\langle f,k_{z_{n}}\right\rangle k_{z_{n}}^{\bot},
\end{align*}
where $k_{z_{n}}^{\bot}$ is given by (\ref{bi orth}). It remains to verifies
that $\left(  k_{z_{n}}^{\bot}\right)  $ forms a bi-orthogonal basis for
$K_{B}$. Indeed,%
\begin{align*}
\left\langle k_{z_{n}}^{\bot},k_{z_{m}}\right\rangle  &  =\left\langle
\frac{2\operatorname{Im}z_{n}}{B_{n}\left(  z_{n}\right)  }B_{n}k_{z_{n}%
},k_{z_{m}}\right\rangle =\frac{2\operatorname{Im}z_{n}}{B_{n}\left(
z_{n}\right)  }\left\langle B_{n}k_{z_{n}},k_{z_{m}}\right\rangle \\
&  =\frac{2\operatorname{Im}z_{n}}{B_{n}\left(  z_{n}\right)  }B_{n}\left(
z_{m}\right)  k_{z_{n}}\left(  z_{m}\right)  .
\end{align*}
If $n\neq m$ then $B_{n}\left(  z_{m}\right)  =0$. If $n=m$ then by
(\ref{norm})
\[
\left\langle k_{z_{n}}^{\bot},k_{z_{m}}\right\rangle =2\operatorname{Im}%
z_{n}\ k_{z_{n}}\left(  z_{n}\right)  =1.
\]
The formula (\ref{mn}) easily follows now.
\end{proof}

\subsection{Hankel operators\label{HO}}

A Hankel operator is an infinitely dimensional analog of a Hankel matrix, a
matrix whose $(j,k)$ entry depends only on $j+k$. In the context of integral
operators the Hankel operator is usually defined as an integral operator on
$L^{2}(\mathbb{R}_{+})$ whose kernel depends on the sum of the arguments%
\begin{equation}
(\mathbb{H}f)(x)=\int_{0}^{\infty}h(x+y)f(y)dy,\;f\in L^{2}(\mathbb{R}%
_{+}),\;x\geq0\label{eq4.10}%
\end{equation}
and it is this form that Hankel operators typically appear in the inverse
scattering formalism. It is much more convenient for our purposes to consider
\emph{Hankel operators} on $H^{2}$ (cf. \cite{Nik2002}, \cite{Peller2003}).

Let%
\[
(\mathbb{J}f)(x)=f(-x)
\]
be the operator of reflection on $L^{2}$ and let $\varphi\in L^{\infty}$. The
operators $\mathbb{H}(\varphi)$ defined by%
\begin{equation}
\mathbb{H}(\varphi)f=\mathbb{JP}_{-}\varphi f,\ \ \ f\in H^{2}, \label{eq4.1}%
\end{equation}
is called the Hankel operator with the symbol $\varphi$.

It is clear that $\mathbb{H}(\varphi)$ is bounded from $H^{2}$ to $H^{2}$ and
\begin{equation}
\mathbb{H}(\varphi+h)=\mathbb{H}(\varphi)\text{ for any }h\in H^{\infty}.
\label{Sarason}%
\end{equation}
It is also straightforward to verify that $\mathbb{H}(\varphi)$ is selfadjoint
if $\mathbb{J}\varphi=\bar{\varphi}.$

The following elementary lemma on Hankel operators with analytic symbols will
be particularly useful.

\begin{lemma}
\label{on split}Let a function $\varphi$ be meromorphic on $\mathbb{C}$ and
subject to%
\begin{equation}
\varphi\left(  -\overline{z}\right)  =\bar{\varphi}\left(  z\right)  \text{
(symmetry).} \label{sym of fi}%
\end{equation}
If $\varphi$ has finitely many simple poles $\left\{  z_{n}\right\}
_{n=-N}^{N}$ in $\mathbb{C}^{+}$, is bounded on $\mathbb{R}$, and for any
$h\geq0$
\begin{equation}
\varphi\left(  x+ih\right)  =O\left(  x^{-1}\right)  ,x\rightarrow\pm\infty,
\label{est}%
\end{equation}
then the Hankel operator $\mathbb{H}(\varphi)$ is selfadjoint, trace class,
and admits the decomposition%
\begin{equation}
\mathbb{H}(\varphi)=\mathbb{H}(\phi)+\mathbb{H}(\Phi), \label{basic split}%
\end{equation}
where $\phi$ is a rational function and $\Phi$ is an entire function given
respectively by%
\[
\phi\left(  x\right)  =%
%TCIMACRO{\dsum _{-N\leq n\leq N}}%
%BeginExpansion
{\displaystyle\sum_{-N\leq n\leq N}}
%EndExpansion
\frac{\operatorname*{Res}\left(  \varphi,z_{n}\right)  }{x-z_{n}},
\]%
\begin{equation}
\Phi\left(  x\right)  =-\frac{1}{2\pi i}\int_{\mathbb{R}+ih}\frac
{\varphi\left(  s\right)  }{s-x}\ ds,\ \ \ h>\max_{n}\operatorname{Im}z_{n}.
\label{big fi}%
\end{equation}
Moreover,%
\begin{equation}
\mathbb{H}(\phi)=%
%TCIMACRO{\dsum _{-N\leq n\leq N}}%
%BeginExpansion
{\displaystyle\sum_{-N\leq n\leq N}}
%EndExpansion
i\operatorname*{Res}\left(  \varphi,z_{-n}\right)  \left\langle \cdot
,k_{z_{-n}}\right\rangle k_{z_{n}}, \label{H(small fi)}%
\end{equation}%
\begin{equation}
\mathbb{H}(\Phi)=\int_{\mathbb{R}+ih}\frac{dz}{2\pi}\varphi\left(  z\right)
\left\langle \cdot,k_{z}\right\rangle k_{-\overline{z}}=\int_{\mathbb{R}%
+ih}\frac{dz}{2\pi}\varphi\left(  -\overline{z}\right)  \left\langle
\cdot,k_{-\overline{z}}\right\rangle k_{z}, \label{H(big fi)}%
\end{equation}
where $k_{\lambda}\left(  z\right)  =\frac{i}{z-\overline{\lambda}}$ is the
reproducing kernel of $H^{2}$.
\end{lemma}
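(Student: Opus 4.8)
The plan is to establish the three assertions—selfadjointness, the splitting (\ref{basic split}), and the trace‑class property—in that order, with the splitting carrying most of the weight. Selfadjointness is immediate: restricting the symmetry (\ref{sym of fi}) to real $z=x$ gives $\varphi(-x)=\bar\varphi(x)$, i.e.\ $\mathbb{J}\varphi=\bar\varphi$, so $\mathbb{H}(\varphi)$ is selfadjoint by the criterion recorded right after (\ref{Sarason}). For the splitting I would define $\phi$ as the sum of the principal parts of $\varphi$ at the poles $z_n\in\mathbb{C}^{+}$, so that $g:=\varphi-\phi$ is regular at every $z_n$; since by hypothesis the only poles of $\varphi$ lie in $\mathbb{C}^{+}$ and $\varphi$ is regular on $\mathbb{R}$, the function $g$ is entire. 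Because $\mathbb{H}$ is linear in its symbol and, by Sarason's identity (\ref{Sarason}), annihilates $H^{\infty}$, the equality (\ref{basic split}) will follow once I verify that $\varphi-\phi-\Phi\in H^{\infty}$.

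To identify the remainder I would run the Cauchy theorem on the strip $\{0<\operatorname{Im}z<h\}$ with $h>\max_{n}\operatorname{Im}z_{n}$. Since $\phi$ is rational with all poles below $\mathbb{R}+ih$ and is $O(z^{-1})$, closing upward gives $\int_{\mathbb{R}+ih}\phi(s)(s-x)^{-1}\,ds=0$, so $\Phi$ in (\ref{big fi}) is unchanged when $\varphi$ is replaced by $g$. Applying the residue theorem to $g(s)(s-x)^{-1}$ on the boundary of the strip (the vertical sides drop because $g=O(s^{-1})$ on horizontals, uniformly for $\operatorname{Im}s\in[0,h]$) yields, for $x$ in the strip,
\[
g(x)=\frac{1}{2\pi i}\int_{\mathbb{R}}\frac{g(s)}{s-x}\,ds+\Phi(x),
\]
so $g-\Phi=\mathbb{P}_{+}g$ is analytic in $\mathbb{C}^{+}$. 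The key point is that $g-\Phi$ is \emph{bounded} on $\mathbb{R}$: both $g$ and, from (\ref{big fi}), $\Phi$ are (with $\Phi(x)\to0$, the factor $(s-x)^{-1}$ supplying the missing integrability). An $H^{2}$ function with bounded boundary values lies in $H^{\infty}$ by the Poisson representation, so $\varphi-\phi-\Phi\in H^{\infty}$ and (\ref{basic split}) holds. The same contour computation shows $\Phi$ is entire, being $h$‑independent for $h>\max_{n}\operatorname{Im}z_{n}$ (two such contours bound a pole‑free strip), so (\ref{big fi}) defines a single analytic function on all of $\mathbb{C}$.

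For the explicit kernels I would expand $\phi f=\sum_{n}\operatorname{Res}(\varphi,z_{n})(x-z_{n})^{-1}f$ and push $\mathbb{P}_{-}$ through via (\ref{P_}) and Cauchy's formula (\ref{form for rep ker}), turning each term into $\operatorname{Res}(\varphi,z_{n})\langle f,k_{z_{n}}\rangle(x-z_{n})^{-1}$; applying $\mathbb{J}$ and rewriting $-(x+z_{n})^{-1}=i\,k_{-\overline{z_{n}}}(x)$ by (\ref{repro ker}) gives, after relabelling $z_{-n}=-\overline{z_{n}}$, exactly (\ref{H(small fi)}). For $\mathbb{H}(\Phi)$ I would insert (\ref{big fi}) into $\mathbb{J}\mathbb{P}_{-}\Phi f$, interchange the $s$‑ and $y$‑integrations, and split the inner integral by the partial fraction $\frac{1}{(y-(x-i0))(s-y)}=\frac{1}{s-(x-i0)}\left(\frac{1}{y-(x-i0)}+\frac{1}{s-y}\right)$; the first Cauchy integral vanishes because $f\in H^{2}$ ($\mathbb{P}_{-}f=0$ in (\ref{Riesz proj})) and the second equals $-2\pi i\,f(s)$ ($\mathbb{P}_{+}f=f$), giving $(\mathbb{P}_{-}\Phi f)(x)=\frac{i}{2\pi}\int_{\mathbb{R}+ih}\varphi(s)f(s)(s-x)^{-1}\,ds$. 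Replacing $f(s)$ by $\langle f,k_{s}\rangle$ and, after $\mathbb{J}$, $(s+x)^{-1}$ by $-i\,k_{-\overline{s}}(x)$ produces the first equality of (\ref{H(big fi)}); the second follows from the substitution $s\mapsto-\overline{s}$ together with the symmetry (\ref{sym of fi}).

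Finally, $\mathbb{H}(\phi)$ is a sum of $2N+1$ rank‑one operators by (\ref{H(small fi)}), hence trace class. The representation (\ref{H(big fi)}) exhibits $\mathbb{H}(\Phi)$ as a continuous superposition of the rank‑ones $\langle\cdot,k_{s}\rangle k_{-\overline{s}}$, whose trace norms are $\|k_{s}\|\,\|k_{-\overline{s}}\|=(2h)^{-1}$ by (\ref{norm}), constant on the contour; the resulting bound on the trace norm is $\tfrac{1}{4\pi h}\int_{\mathbb{R}}|\varphi(t+ih)|\,dt$. This is exactly where the main difficulty lies: under the stated $\varphi(t+ih)=O(t^{-1})$ the integral diverges (logarithmically), so the rank‑one estimate is borderline and does not by itself deliver trace class. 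To close the gap I would pass through the Fourier transform to the form (\ref{eq4.10}) on $L^{2}(\mathbb{R}_{+})$: the co‑analytic symbol $\phi+\Phi$ becomes a kernel $h$ whose $\Phi$‑part is $\propto\int_{\mathbb{R}+ih}\varphi(z)e^{izx}\,dz$, carrying the factor $|e^{izx}|=e^{-hx}$ from the contour height, while the $\phi$‑part is $\sum_{n}c_{n}e^{iz_{n}x}$ and decays exponentially. One integration by parts in $z$—justified because a Cauchy estimate upgrades $\varphi=O(z^{-1})$ to $\varphi'=O(z^{-2})$ on horizontals—makes the $z$‑integral absolutely convergent and leaves a kernel that is $O(e^{-hx}/x)$; the exponential decay then gives trace class, for instance by factoring $e^{-hx}=e^{-hx/2}e^{-hx/2}$ into two Hilbert--Schmidt Hankel operators. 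I expect this upgrade from the borderline $O(x^{-1})$ bound to genuine trace class to be the sole genuinely technical step; the rest is residue bookkeeping and the reproducing‑kernel identities already available in the excerpt.
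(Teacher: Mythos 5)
Your proof of selfadjointness, of the splitting (\ref{basic split}), and of the two kernel formulas (\ref{H(small fi)})--(\ref{H(big fi)}) is correct and is essentially the paper's own argument: the paper obtains $\mathbb{P}_{-}\varphi=\phi+\Phi$ in one stroke by shifting the contour in the Riesz projection (\ref{Riesz proj}) from $\mathbb{R}$ up to $\mathbb{R}+ih$ and collecting the residues at the poles $z_{n}$, and then reads off both (\ref{H(small fi)}) and (\ref{H(big fi)}) from the identity $\mathbb{H}\bigl((\cdot-z)^{-1}\bigr)f=if(z)k_{-\overline{z}}$ supplied by (\ref{P_}); your route through ``the remainder lies in $H^{\infty}$'' plus (\ref{Sarason}), and your partial-fraction evaluation of $\mathbb{J}\mathbb{P}_{-}\Phi f$, are the same computations in expanded form. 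The genuine gap is in the trace-class step, precisely at your claim that ``a Cauchy estimate upgrades $\varphi=O(z^{-1})$ to $\varphi'=O(z^{-2})$ on horizontals.''

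That upgrade is unavailable under the stated hypotheses. Cauchy's estimate on a disk of fixed radius $r$ centered at $x+ih$ gives only $|\varphi'(x+ih)|\leq r^{-1}\sup_{|w-x-ih|\leq r}|\varphi(w)|=O(|x|^{-1})$; to gain a power of $x$ one needs disks of radius comparable to $|x|$, hence bounds on $\varphi$ along horizontal lines of unbounded height with constants uniform in $h$ --- exactly what (\ref{est}) does not provide (uniformity in $h$ is singled out as an \emph{extra} assumption in Corollary \ref{corollary on fi=0}, and under it $\Phi=0$, so nothing would be left to prove). The failure is fatal rather than technical: $\varphi(z)=z^{-1}\sin z$ satisfies every hypothesis of the lemma (entire, so $\phi=0$ as in Corollary \ref{corollary on analytic fi}; symmetric; bounded on $\mathbb{R}$; and $O(|x|^{-1})$ on each horizontal line, with constant $\cosh h$), yet $\varphi'=O(|x|^{-1})$ only, and its Hankel operator --- whose kernel on the Fourier side is proportional to $\chi_{[0,1]}(\xi+\eta)$ --- has the same nonzero singular values as a constant multiple of the Volterra operator $Vf(\xi)=\int_{0}^{\xi}f$ on $L^{2}(0,1)$, i.e.\ singular values of order $n^{-1}$: Hilbert--Schmidt and compact, but \emph{not} trace class. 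So the trace-class assertion itself fails at this level of generality, and no repair of your final step (which is in any case vague, since a Hankel operator does not factor as a product of Hankel operators) can succeed without strengthening (\ref{est}). For perspective, you did correctly isolate the true obstruction --- the logarithmic divergence of the rank-one bound $\frac{1}{4\pi h}\int_{\mathbb{R}+ih}|\varphi|$ --- which the paper's proof never confronts: it merely asserts that $\Phi$, being entire with all derivatives bounded on $\mathbb{R}$, is ``at least trace class,'' an implication refuted by the same example, where $\Phi(z)=(1-e^{-iz})/(2iz)$. In the body of the paper the lemma is applied only to symbols carrying the factor $\xi_{x,t}$, which decays rapidly on $\mathbb{R}+ih$; for those, $\varphi$ is integrable on the contour and your own rank-one superposition bound yields trace class immediately, with no integration by parts. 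The clean fix, for your argument and for the lemma, is to assume something like $\varphi\in L^{1}(\mathbb{R}+ih)$ for one $h>\max_{n}\operatorname{Im}z_{n}$ rather than to try to extract trace class from the borderline $O(x^{-1})$.
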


\begin{proof}
The selfadjointness follows from (\ref{sym of fi}). By (\ref{Sarason})
\[
\mathbb{H}(\varphi)=\mathbb{H}(\mathbb{P}_{-}\varphi)
\]
and hence we have to worry only about $\mathbb{P}_{-}\varphi$. By by the
residue theorem ($h>\max_{k}\operatorname{Im}z_{k}$), we have%
\begin{align*}
\left(  \mathbb{P}_{-}\varphi\right)  \left(  x\right)   &  =-\frac{1}{2\pi
i}\int_{\mathbb{R}}\frac{\varphi\left(  s\right)  }{s-\left(  x-i0\right)
}\ ds\\
&  =%
%TCIMACRO{\dsum _{-N\leq n\leq N}}%
%BeginExpansion
{\displaystyle\sum_{-N\leq n\leq N}}
%EndExpansion
\frac{\operatorname*{Res}\left(  \varphi,z_{n}\right)  }{x-z_{n}}-\frac
{1}{2\pi i}\int_{\mathbb{R}+ih}\frac{\varphi\left(  s\right)  }{s-z}\ ds\\
&  =\phi\left(  x\right)  +\Phi\left(  x\right)  ,
\end{align*}
and (\ref{basic split}) follows. Apparently $\Phi$ is analytic (and bounded)
below the line $\mathbb{R}+ih$. Since $h$ is arbitrary, $\Phi$ is then entire.
Moreover, all derivatives of $\Phi$ are bounded on $\mathbb{R}$ and therefore
$\mathbb{H}(\Phi)$ is at least trace class (in any Shatten-von Neumann ideal).

It follows from (\ref{P_}) that for any $z\in\mathbb{C}^{+}$%
\[
\mathbb{H}(\frac{1}{\cdot-z})f=if\left(  z\right)  k_{-\overline{z}}%
\]
and (\ref{H(small fi)})-(\ref{H(big fi)}) follow.
\end{proof}

\begin{corollary}
\label{corollary on analytic fi}If $\varphi$ has no poles in $\mathbb{C}^{+}$
then $\mathbb{H}(\varphi)=\mathbb{H}(\Phi)$.
\end{corollary}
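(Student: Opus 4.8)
The plan is to derive the corollary directly from the decomposition in Lemma \ref{on split} by observing that the absence of poles in $\mathbb{C}^{+}$ forces the rational part $\phi$ to vanish identically. First I would note that the hypotheses of the corollary are exactly the hypotheses of Lemma \ref{on split} with $N=0$ and no poles present, so the decomposition (\ref{basic split}) applies and reads $\mathbb{H}(\varphi)=\mathbb{H}(\phi)+\mathbb{H}(\Phi)$.

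\begin{proof}
If $\varphi$ has no poles in $\mathbb{C}^{+}$, then the index set $\left\{ z_{n}\right\}$ of poles appearing in the statement of Lemma \ref{on split} is empty. Consequently the rational summand in the decomposition (\ref{basic split}) is the empty sum,
\[
\phi\left( x\right) =\sum_{-N\leq n\leq N}\frac{\operatorname*{Res}\left( \varphi,z_{n}\right) }{x-z_{n}}=0,
\]
so that $\mathbb{H}(\phi)=\mathbb{H}(0)=0$. The remaining summand $\Phi$ is given by (\ref{big fi}) with $h>0$ arbitrary, and (\ref{basic split}) collapses to $\mathbb{H}(\varphi)=\mathbb{H}(\Phi)$.
\end{proof}

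The essential content is thus already carried by Lemma \ref{on split}: the only new observation is bookkeeping of the empty pole set. The one point that warrants care is checking that the remaining hypotheses of Lemma \ref{on split} are genuinely satisfied so that the decomposition is legitimately available, namely that $\varphi$ is bounded on $\mathbb{R}$, satisfies the symmetry (\ref{sym of fi}), and obeys the decay estimate (\ref{est}); under these the contour $\mathbb{R}+ih$ in (\ref{big fi}) may be pushed arbitrarily high, and since $\varphi$ now has no poles anywhere in $\mathbb{C}^{+}$, the integral defining $\Phi$ represents the full projection $\mathbb{P}_{-}\varphi$ rather than merely its regular part. The main obstacle, such as it is, is purely one of invoking the preceding lemma with the correct specialization rather than any substantive new estimate; there is no hard analytic step remaining once $\phi\equiv 0$ is recognized.
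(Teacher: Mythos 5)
Your proof is correct and is exactly the argument the paper leaves implicit: with no poles in $\mathbb{C}^{+}$ the rational part $\phi$ in the decomposition (\ref{basic split}) of Lemma \ref{on split} is an empty sum, so $\mathbb{H}(\phi)=0$ and $\mathbb{H}(\varphi)=\mathbb{H}(\Phi)$. Equivalently, in the lemma's proof the residue sum in the contour-shift computation of $\mathbb{P}_{-}\varphi$ disappears, which is the same observation.
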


\begin{corollary}
\label{corollary on fi=0}If (\ref{est}) holds uniformly in $h\geq h_{0}%
>\max_{n}\operatorname{Im}z_{n}$ then $\Phi=0.$
\end{corollary}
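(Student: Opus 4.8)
The plan is to exploit the freedom in the height $h$ of the contour in the representation (\ref{big fi}) and to push it to $+\infty$. First I would record that for a fixed $x$ the value
\[
\Phi(x)=-\frac{1}{2\pi i}\int_{\mathbb{R}+ih}\frac{\varphi(s)}{s-x}\,ds
\]
is \emph{independent} of $h$ as long as $h>\max_{n}\operatorname{Im}z_{n}$ and $h>\operatorname{Im}x$. This is precisely the gluing already invoked in the proof of Lemma \ref{on split}: shifting the contour from height $h_{1}$ to height $h_{2}$ encloses no pole of $\varphi(s)/(s-x)$, while the two connecting vertical segments $\operatorname{Re}s=\pm R$ contribute $O(1/R)$ by the tail decay in (\ref{est}) and hence drop out as $R\to\infty$.

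With $x$ fixed, I would then let $h\to\infty$. Writing $s=t+ih$ and estimating crudely,
\[
2\pi\,|\Phi(x)|\leq\int_{-\infty}^{\infty}\frac{|\varphi(t+ih)|}{|t+ih-x|}\,dt,
\]
so the goal is to show the right-hand side vanishes in the limit. I would split at $|t|=X$, the threshold beyond which the uniform bound $|\varphi(t+ih)|\leq C|t|^{-1}$ (valid for all $h\geq h_{0}$) applies. On the tails $|t|\geq X$ a further split at $|t|=h$ does the job: for $X\leq|t|\leq h$ one uses $|t+ih-x|\geq h-\operatorname{Im}x\gtrsim h$ to get a contribution $O(h^{-1}\log h)$, and for $|t|>h$ one uses $|t+ih-x|\geq|t-\operatorname{Re}x|\gtrsim|t|$ to get a contribution $O(h^{-1})$; both tend to $0$.

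It remains to dispose of the bounded window $|t|<X$, where once more $|t+ih-x|\gtrsim h$, so that this piece is at most a constant times $h^{-1}\int_{-X}^{X}|\varphi(t+ih)|\,dt$. This is the one genuinely delicate point, and I expect it to be the main obstacle: the hypotheses pin down $\varphi$ only as $\operatorname{Re}s\to\pm\infty$ and say nothing a priori about its growth in the \emph{vertical} direction over a fixed band of real parts. I would close the gap by observing that $\varphi$ is analytic on the half-plane $\{\operatorname{Im}s\geq h_{0}\}$ and, by continuity together with the uniform tail bound, is bounded on its whole boundary line $\mathbb{R}+ih_{0}$ as well as on the vertical rays $\operatorname{Re}s=\pm X$; a Phragm\'en--Lindel\"of argument on the half-strip $\{|\operatorname{Re}s|\leq X,\ \operatorname{Im}s\geq h_{0}\}$ (or, equivalently, the standing regularity of $\varphi$ in the intended applications) then furnishes a uniform bound $\sup_{\operatorname{Im}s\geq h_{0}}|\varphi(s)|<\infty$, making $\int_{-X}^{X}|\varphi(t+ih)|\,dt=O(1)$. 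Granting this, every piece above tends to $0$; since $\Phi(x)$ does not depend on $h$, we conclude $\Phi(x)=0$ for every $x$, i.e. $\Phi\equiv0$.
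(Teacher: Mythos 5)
Your overall route --- exploit the $h$-independence of (\ref{big fi}) and push $h\to\infty$ --- is exactly the argument the paper intends: the corollary is stated without a separate proof, and the same device (independence of $h$ plus passage to the limit $h\to\infty$ under the integral) is what the paper uses later, in the proof of Corollary \ref{smooth elements} and again at the end of the proof of Theorem \ref{main thm}. Your contour-shift justification and both tail estimates ($X\le|t|\le h$ and $|t|>h$) are correct. The gap is precisely where you suspected it: the window $|t|<X$. The Phragm\'en--Lindel\"of principle on a half-strip of width $2X$ is not unconditional; it requires an a priori growth restriction of the form $|\varphi(z)|\le C\exp\bigl(Ae^{\beta\operatorname{Im}z}\bigr)$ with $\beta<\pi/(2X)$, and nothing in the hypotheses of Lemma \ref{on split} or Corollary \ref{corollary on fi=0} --- meromorphy, boundedness on $\mathbb{R}$, horizontal decay (\ref{est}) uniform in $h$ --- supplies any vertical growth control. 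Boundedness on the boundary of the half-strip alone is genuinely insufficient: the function $\exp\bigl(e^{-i\pi z/(2X)}\bigr)$ has modulus $1$ on the vertical sides $\operatorname{Re}z=\pm X$ and is bounded on any bottom edge, yet grows doubly exponentially inside the strip. So the clause ``a Phragm\'en--Lindel\"of argument then furnishes a uniform bound'' is not a proof step; it is the missing hypothesis restated.

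To be fair, you have put your finger on a point the paper glosses over entirely: its implicit dominated-convergence argument needs the very same boundedness of $\varphi$ on vertical half-strips that you tried to manufacture. But the honest repair is not Phragm\'en--Lindel\"of; it is the stronger uniformity actually available in every application of this corollary. The symbols to which it is applied (e.g. $\varphi_{x}^{\varepsilon}$ in (\ref{fi^epsilon}), or $\varphi_{x,0}$, with $x>0$) are rational functions decaying like $|k|^{-3}$ at infinity multiplied by $e^{2ikx}$ with $x>0$, which is bounded by $1$ on $\overline{\mathbb{C}^{+}}$; hence $|\varphi(z)|\le C/(1+|z|)$ uniformly on the whole closed region $\{\operatorname{Im}z\ge h_{0}\}$, not merely line by line as $\operatorname{Re}z\to\pm\infty$. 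With that bound the window contributes $O(h^{-2})$, every piece of your estimate goes through, and $\Phi\equiv0$ follows. In short: same approach as the paper, correct handling of the tails, but the P--L patch is invalid as stated, and the clean fix is to read (or restate) the hypothesis as uniform decay on the region $\operatorname{Im}z\ge h_{0}$.
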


A very important feature of analytic symbols is that $\mathbb{H}(\varphi)$ is
well-defined outside of $H^{2}$. In particular, $\mathbb{H}(\varphi)k_{x+i0}$
is a smooth element of $H^{2}$ for any $x\in\mathbb{R}$ while $k_{x+i0}%
\not \in H^{2}$. We will need the following statement.

\begin{corollary}
\label{smooth elements}For every $x,s\in\mathbb{R}$%
\begin{align}
\mathbb{H}(\Phi)k_{x}\left(  s\right)   &  =\lim_{\varepsilon\rightarrow
0}\mathbb{H}(\Phi)k_{x+i\varepsilon}\left(  s\right)  \nonumber\\
&  =-\int_{\mathbb{R}+ih}\frac{\Phi\left(  z\right)  }{\left(  z-x\right)
\left(  z+s\right)  }\frac{dz}{2\pi}\nonumber\\
&  =-\int_{\mathbb{R}+ih}\frac{\varphi\left(  z\right)  }{\left(  z-x\right)
\left(  z+s\right)  }\frac{dz}{2\pi}\label{K(s) 0}\\
&  =:K_{x}\left(  s\right)  \in C^{\infty}\left(  \mathbb{R}\right)  \cap
H^{2}.\nonumber
\end{align}
Moreover, if $\varphi_{\varepsilon}\rightarrow\varphi$ uniformly on
$\mathbb{R}+ih$ then for every $x,s\in\mathbb{R}$%
\begin{equation}
\lim_{\varepsilon\rightarrow0}\mathbb{H}(\Phi_{\varepsilon})k_{x+i\varepsilon
}\left(  s+i\varepsilon\right)  =K_{x}\left(  s\right)  .\label{K(s) 1}%
\end{equation}
Convergence in (\ref{K(s) 0}) and (\ref{K(s) 1}) also holds in $L^{2}$.
\end{corollary}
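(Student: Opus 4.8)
The plan is to reduce everything to a single explicit contour integral by exploiting that $k_{x+i\varepsilon}\in H^{2}$ for $\varepsilon>0$, applying the already-established formula \eqref{H(big fi)}, and only then passing to the boundary $\varepsilon\to0$. First I would fix $\lambda=x+i\varepsilon\in\mathbb{C}^{+}$ so that $k_{\lambda}\in H^{2}$ is a genuine element of the domain, and compute $\mathbb{H}(\Phi)k_{\lambda}$ directly from \eqref{H(big fi)}. Using the reproducing property \eqref{form for rep ker}, namely $\langle k_{\lambda},k_{z}\rangle=k_{\lambda}(z)=i/(z-\overline{\lambda})$, together with $k_{-\overline{z}}(s)=i/(s+z)$ read off from \eqref{repro ker}, this collapses to
\[
\mathbb{H}(\Phi)k_{\lambda}(s)=-\int_{\mathbb{R}+ih}\frac{\varphi(z)}{(z-\overline{\lambda})(z+s)}\,\frac{dz}{2\pi}.
\]
Equivalently one may run this through the identity $\mathbb{H}(1/(\cdot-z))f=if(z)k_{-\overline{z}}$ from \eqref{P_}, interchanging the Hankel operator with the contour integral defining $\Phi$ in \eqref{big fi}; both routes give the same kernel.

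Next I would let $\varepsilon\to0$, i.e.\ $\overline{\lambda}=x-i\varepsilon\to x$. Since $z\in\mathbb{R}+ih$ keeps $\operatorname{Im}(z-\overline{\lambda})=h+\varepsilon\geq h>0$ and $\operatorname{Im}(z+s)=h>0$, the denominators never vanish, the integrand converges pointwise to $\varphi(z)/((z-x)(z+s))$, and the decay \eqref{est} $\varphi=O(z^{-1})$ makes $|\varphi(z)|\,|z-x|^{-1}|z+s|^{-1}=O(|\operatorname{Re}z|^{-3})$ an $\varepsilon$-uniform integrable dominant; dominated convergence then yields the limit formula for $K_{x}(s)$ in \eqref{K(s) 0}, and also its $L^{2}$ version once the same estimate is applied to the Bochner integral $\int_{\mathbb{R}+ih}\varphi(z)\,i(z-\overline{\lambda})^{-1}k_{-\overline{z}}\,dz/(2\pi)$, whose integrand has $H^{2}$-norm $\|k_{-\overline{z}}\|=(2h)^{-1/2}$ bounded by \eqref{norm}. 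To replace $\varphi$ by $\Phi$ in \eqref{K(s) 0} I would use the splitting of Lemma \ref{on split}: writing $\varphi=\mathbb{P}_{+}\varphi+\phi+\Phi$, the piece $\mathbb{P}_{+}\varphi$ is analytic in $\mathbb{C}^{+}$ and $\phi$ has all its poles $z_{n}$ below $\mathbb{R}+ih$, so in each case $((z-x)(z+s))^{-1}$ times an $O(z^{-1})$ analytic factor gives an $O(z^{-3})$ integrand that we close upward to obtain $0$; hence the $\Phi$- and $\varphi$-integrals agree.

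Smoothness of $K_{x}$ in $s$ (and in $x$) follows by differentiating under the integral sign, legitimate because $\operatorname{Im}(z+s)=h>0$ keeps all $s$-derivatives of $(z+s)^{-1}$ bounded and the $O(z^{-1})$ decay preserves integrability; membership $K_{x}\in H^{2}$ follows from the Bochner representation as a weighted superposition of the $H^{2}$ kernels $k_{-\overline{z}}$ (note $-\overline{z}\in\mathbb{C}^{+}$). Finally, for \eqref{K(s) 1} I would repeat the computation with $\lambda=x+i\varepsilon$ and argument $s+i\varepsilon$, arriving at $-\int_{\mathbb{R}+ih}\varphi_{\varepsilon}(z)\big((z-x+i\varepsilon)(z+s+i\varepsilon)\big)^{-1}\,dz/(2\pi)$, and again invoke dominated convergence: uniform convergence $\varphi_{\varepsilon}\to\varphi$ on $\mathbb{R}+ih$ bounds $|\varphi_{\varepsilon}|\leq|\varphi|+1$ for small $\varepsilon$, and $(|\varphi(z)|+1)|\operatorname{Re}z|^{-2}$ is the required $\varepsilon$-uniform integrable dominant. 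I expect the main obstacle to be precisely this last point: securing a single integrable majorant valid simultaneously for all small $\varepsilon$, which hinges on combining the hypothesized decay \eqref{est} with the uniform-on-$\mathbb{R}+ih$ convergence of $\varphi_{\varepsilon}$; everything else is a routine consequence of the reproducing-kernel calculus and residue bookkeeping already set up in Lemma \ref{on split}.
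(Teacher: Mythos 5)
Your proposal is correct and follows essentially the same route as the paper's proof: apply the integral representation \eqref{H(big fi)} to $k_{x+i\varepsilon}\in H^{2}$, collapse it via the reproducing property \eqref{form for rep ker} to the explicit contour integral, pass to the limit $\varepsilon\to0$ by dominated convergence, and justify replacing $\Phi$ by $\varphi$ through upward contour pushing of the analytic/rational pieces. The only difference is that you spell out details the paper leaves implicit (the explicit integrable majorants, the $\mathbb{P}_{+}\varphi$ piece, smoothness, $H^{2}$ membership, and the $L^{2}$ statement), which is a matter of completeness rather than of method.
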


\begin{proof}
It follows from (\ref{H(big fi)}) that%
\begin{align*}
\mathbb{H}(\Phi)k_{x+i\varepsilon}\left(  s\right)   &  =\int_{\mathbb{R}%
+ih}\frac{dz}{2\pi}\varphi\left(  z\right)  \left\langle k_{x+i\varepsilon
},k_{z}\right\rangle k_{-\overline{z}}\left(  s\right)  \\
&  =\int_{\mathbb{R}+ih}\frac{dz}{2\pi}\varphi\left(  z\right)  \left\langle
k_{x+i\varepsilon},k_{z}\right\rangle k_{-\overline{z}}\left(  s\right)
\text{ \ \ \ \ \ \ \ \ \ \ (by (\ref{form for rep ker}))}\\
&  =-\int_{\mathbb{R}+ih}\frac{dz}{2\pi}\varphi\left(  z\right)
k_{x+i\varepsilon}\left(  z\right)  k_{-\overline{z}}\left(  s\right)  \\
&  \rightarrow-\int_{\mathbb{R}+ih}\frac{\varphi\left(  z\right)  }{\left(
z-x\right)  \left(  z+s\right)  }\frac{dz}{2\pi}=-\int_{\mathbb{R}+ih}%
\frac{\Phi\left(  z\right)  }{\left(  z-x\right)  \left(  z+s\right)  }%
\frac{dz}{2\pi},\ \\
\varepsilon &  \rightarrow0,
\end{align*}
where we have used two obvious facts: (a) $k_{x+i\varepsilon}\left(  z\right)
\rightarrow k_{x}\left(  z\right)  $ uniformly on $\mathbb{R}+ih$, and (b) by
the Lebesgue dominated convergence%
\[
\int_{\mathbb{R}+ih}\frac{\phi\left(  z\right)  }{\left(  z-x\right)  \left(
z+s\right)  }\frac{dz}{2\pi}=\lim_{h\rightarrow\infty}\int_{\mathbb{R}%
+ih}\frac{\phi\left(  z\right)  }{\left(  z-x\right)  \left(  z+s\right)
}\frac{dz}{2\pi}=0.
\]
Thus (\ref{K(s) 0}) is proven. (\ref{K(s) 1}) is proven similarly.
\end{proof}

\section{Our explicit potential and its short-range
approximation\label{our explicit Q}}

In this section we explicitly construct a symmetric Wigner-von Neumann type
potentials supporting one negative and one positive bound state. Our
construction is base upon a classical Gelfand-Levitan example
\cite{LevitanInverseProblems87} of an explicit potential of a half-line
Schr\"{o}dinger operator which spectral measure has one positive pure point.
The symmetric extension of this potential to the whole line will be our
initial condition. We then find its explicit short range approximation, which
will be crucial to our consideration.

\subsection{An explicit WvN type potential\label{subs1}}

Consider the function%
\begin{equation}
m\left(  \lambda\right)  =i\sqrt{\lambda}+\frac{2\rho}{1-\lambda
},\ \ \ \operatorname{Im}\lambda\geq0,\label{m}%
\end{equation}
where $\rho$ is some positive number. This is a Herglotz function (i.e.
analytic function mapping $\mathbb{C}^{+}$\ to $\mathbb{C}^{+}$) which
coincides with the Titchmarsh-Weyl $m-$function\footnote{We recall that the
problem $-\partial_{x}^{2}u+q(x)u=\lambda u,\ x\in\left(  0,\pm\infty\right)
,\ u\left(  \pm0,\lambda\right)  =1$ has a unique square integrable
(\emph{Weyl}) solution $\Psi_{\pm}(x,\lambda)$ for any $\operatorname{Im}%
\lambda>0$ for broad classes of $q$'s (called\emph{\ limit point case}).
Define then the \emph{(Titchmarsh-Weyl) m-function} $m_{\pm}$ for $\left(
0,\pm\infty\right)  $ as follows: $m_{\pm}\left(  \lambda\right)  =\pm
\partial_{x}\Psi_{\pm}\left(  \pm0,\lambda\right)  $.} of the (Dirichlet)
Schr\"{o}dinger operators $-d^{2}/dx^{2}+q_{0}\left(  x\right)  $ on
$L^{2}\left(  0,\infty\right)  $ with a Dirichlet boundary condition at $0$.
The potential $q_{0}$ has the following explicit form%
\begin{equation}
q_{0}\left(  x\right)  =-2\frac{d^{2}}{dx^{2}}\log\tau_{0}\left(  x\right)
,\ \ \ x\geq0,\label{Q}%
\end{equation}
where%
\begin{equation}
\tau_{0}\left(  x\right)  =1+2\rho\int_{0}^{x}\sin^{2}s\ ds=1+\rho x-\left(
\rho/2\right)  \sin2x.\label{tua for t=0}%
\end{equation}
Introduce%
\begin{equation}
Q\left(  x\right)  =\left\{
\begin{array}
[c]{cc}%
q_{0}\left(  x\right)  , & x\geq0\\
q_{0}\left(  -x\right)  , & x<0
\end{array}
\right.  ,\label{our Q}%
\end{equation}
i.e., $Q$ is an even extension of $q_{0}$. One can easily see that the
function $Q$ is continuous and $Q\left(  0\right)  =0$ but not continuously
differentiable. In fact, $Q\,$\ is as smooth at $x=0$ as $\left\vert \sin
x\right\vert $. Moreover, one has
\begin{equation}
Q\left(  x\right)  =-4\ \dfrac{\sin2x}{x}+O\left(  \frac{1}{x^{2}}\right)
,\ \ x\rightarrow\pm\infty,\label{Q asym}%
\end{equation}
and hence $Q\in L^{2}\left(  \mathbb{R}\right)  $ but $\left(  1+\left\vert
x\right\vert \right)  Q\left(  x\right)  $ is not in $L^{1}\left(
\mathbb{R}\right)  $. Thus, $Q$ is not short-range. Also note that%
\[
\int_{-\infty}^{\infty}Q\left(  x\right)  dx=0.
\]
The main feature of $Q$ is that $\mathbb{L}_{Q}$ admits an explicit spectral
and scattering theory.

\begin{theorem}
\label{Thm on Q}The Schr\"{o}dinger operator $\mathbb{L}_{Q}$ on $L^{2}\left(
\mathbb{R}\right)  $ with $Q$ given by (\ref{our Q}) has the following properties:

\begin{enumerate}
\item (Spectrum) The spectrum of $\mathbb{L}_{Q}$ consists of the two fold
absolutely continuos part filling $\left(  0,\infty\right)  $, one negative
bound state $-\kappa^{2}$ found from the real solution of%
\begin{equation}
\kappa^{3}+\kappa=2\rho\label{neg BS}%
\end{equation}
and one positive (embedded) bound state $+1$.

\item (Scattering quantities) For the norming constant $c$ of $-\kappa^{2}$ we
have%
\begin{equation}
c=-i\operatorname*{Res}\left(  T\left(  k\right)  ,i\kappa\right)
=-i\operatorname*{Res}\left(  R\left(  k\right)  ,i\kappa\right)  =\frac
{2\rho}{3\kappa^{2}+1} \label{c}%
\end{equation}
and for the scattering matrix we have%
\begin{equation}
S\left(  k\right)  =\left(
\begin{array}
[c]{cc}%
T\left(  k\right)  & R\left(  k\right) \\
R\left(  k\right)  & T\left(  k\right)
\end{array}
\right)  ,\ \ \ k\in\mathbb{R}, \label{S matrix}%
\end{equation}
where $T$ and $R$ are, respectively, the transmission and reflection
coefficients given by%
\begin{equation}
T\left(  k\right)  =\frac{P\left(  k\right)  }{P\left(  k\right)  +2i\rho
},\ \ \ R\left(  k\right)  =\frac{-2i\rho}{P\left(  k\right)  +2i\rho},
\label{TLR}%
\end{equation}%
\[
P\left(  k\right)  :=k^{3}-k.
\]

\end{enumerate}
\end{theorem}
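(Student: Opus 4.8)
The plan is to exploit the evenness of $Q$ to reduce the whole-line problem to two half-line problems, and then to extract everything from the explicit $m$-function (\ref{m}). First I would use $Q(x)=Q(-x)$ to split $\mathbb{L}_Q$, via the parity operator, into the orthogonal sum $H_N\oplus H_D$ of the Neumann and Dirichlet realizations of $-\partial_x^2+q_0$ on $L^2(0,\infty)$: odd eigenfunctions of $\mathbb{L}_Q$ correspond to $H_D$ and even ones to $H_N$. By construction $m$ in (\ref{m}) is the Dirichlet Titchmarsh--Weyl function $m_D$, and the Neumann one is $m_N=-1/m_D$. This reduces the spectral claim to reading off the spectra of $H_D$ and $H_N$ from the Herglotz functions $m_D$ and $m_N$.

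For part (1) I would argue as follows. Since $\operatorname{Im}m_D(\lambda+i0)=\sqrt{\lambda}>0$ for $\lambda\in(0,\infty)$, both $H_D$ and $H_N$ have purely absolutely continuous spectrum filling $[0,\infty)$, so $\mathbb{L}_Q$ has two-fold a.c. spectrum there. The point spectrum of $H_D$ consists of the poles of $m_D$, i.e. only $\lambda=1$; the corresponding embedded eigenfunction is the odd extension of $\psi_1(x)=\sin x/\tau_0(x)$, and a direct substitution using $\tau_0'=2\rho\sin^2x$ and $\tau_0''=2\rho\sin2x$ verifies $-\psi_1''+q_0\psi_1=\psi_1$ together with $\psi_1\in L^2$ and $\psi_1(0)=0$. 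The point spectrum of $H_N$ consists of the poles of $m_N$, i.e. the zeros of $m_D$; on $(-\infty,0)$ the substitution $\lambda=-\kappa^2$ gives $m_D(-\kappa^2)=-\kappa+2\rho/(1+\kappa^2)$, whose single negative zero is exactly the root of (\ref{neg BS}) (the cubic $\kappa^3+\kappa$ is strictly increasing), producing the even bound state $-\kappa^2$. That $m_N(1)=0$ shows $H_N$ has no eigenvalue at $1$, so the list is complete.

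For part (2) I would pass to the scattering coefficients through the even/odd channels. For an even potential the whole-line coefficients are $T=\tfrac12(S_N-S_D)$, $R=\tfrac12(S_N+S_D)$, where $S_D,S_N$ are the half-line scattering phases, and by the definition of the Jost function $S_N/S_D=m_D(k^2-i0)/m_D(k^2+i0)$. The one nontrivial input is the Dirichlet Jost function $F_D(k)=f_+(0,k)$. Here I would use the explicit Gelfand--Levitan transformation operator: its kernel is rank one, $K(x,t)=-2\rho\sin x\sin t/\tau_0(x)$, so the regular Dirichlet solution $\phi(x,k)=\sin(kx)/k+\int_0^xK(x,t)\sin(kt)/k\,dt$ is known in closed form and its correction term is $O(1/x)$ for $k\neq1$. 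Hence $\phi(x,k)\sim\sin(kx)/k$ at $+\infty$; comparing with the Jost representation $\phi\sim\bigl(F_D(-k)e^{ikx}-F_D(k)e^{-ikx}\bigr)/(2ik)$ forces $F_D\equiv1$ and $S_D=-1$. With $F_D=1$ one has $f_+'(0,k)=m_D(k^2)$, and from the even-potential Wronskian identity $1/T=f_+(0,k)f_+'(0,k)/(ik)$ one gets $1/T=m_D(k^2)/(ik)=(P(k)+2i\rho)/P(k)$ after using $ik(1-k^2)=-iP(k)$; an analogous reduction of $R=\tfrac12(S_N+S_D)$ yields the second formula in (\ref{TLR}), and the symmetric form (\ref{S matrix}) reflects $R_l=R_r$ for even $Q$. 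The norming constant (\ref{c}) then follows by residues: $P(i\kappa)+2i\rho=i(2\rho-\kappa^3-\kappa)=0$ by (\ref{neg BS}), and since $T$ and $R$ share the denominator $P+2i\rho$ with equal numerators at $i\kappa$ their residues coincide, giving $-i\operatorname{Res}(T,i\kappa)=-i\,P(i\kappa)/P'(i\kappa)=2\rho/(3\kappa^2+1)$.

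The hard part will be making the scattering step rigorous, because $Q$ is genuinely long range: the $\sin2x/x$ tail is exactly what produces the embedded eigenvalue at $k=1$, where $f_+'(0,k)=m_D(k^2)$ blows up, $T(1)=0$ (total reflection), and $1/T$ acquires real poles at $k=0,\pm1$. Classical short-range Jost theory therefore does not apply off the shelf, and the delicate points are the existence and analyticity of $f_+(x,k)$ and the uniformity of the estimate $\phi(x,k)-\sin(kx)/k=O(1/x)$ down to the resonant energy $k=1$ (where instead $\phi(x,1)=\psi_1$ decays). I expect to handle this by leaning on the explicit rank-one structure, which renders $\phi$, $f_+$ and all scattering quantities computable in closed form, so that analyticity and asymptotics can be verified by direct estimation rather than by the general, and here unavailable, non-short-range inverse theory.
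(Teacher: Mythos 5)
Your strategy is viable and rests on the same two pillars as the paper's own proof: the explicit $m$-function (\ref{m}) and the fact that the Jost function is identically $1$. Where the paper simply exhibits the Jost solutions $f_{\pm}(x,k)$ in closed form, verifies them by substitution, and then feeds $f_{\pm}(0,k)=1$, $m_{\pm}=m$ into the general formulas of \cite{HintonKlausIP1989} for $T$, $R$, $L$, you recover the same data from the parity decomposition $\mathbb{L}_{Q}\cong H_N\oplus H_D$ and the rank-one Gelfand--Levitan kernel $K(x,t)=-2\rho\sin x\sin t/\tau_{0}(x)$, deducing $F_D\equiv1$ from the $O(1/x)$ asymptotics of the regular solution. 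Your spectral argument (poles of $m_D$ versus poles of $m_N=-1/m_D$) is in effect a proof of what the paper cites from Titchmarsh, and you add something the paper omits: the explicit embedded eigenfunction $\sin x/\tau_{0}(x)$. Your sign conventions ($S_D=-1$ when $F_D\equiv 1$, $T=\tfrac12(S_N-S_D)$, $R=\tfrac12(S_N+S_D)$, $S_N/S_D=m_D(k^{2}-i0)/m_D(k^{2}+i0)$) are nonstandard but internally consistent: they give $T=1$, $R=0$ for the free problem and reproduce (\ref{TLR}) and (\ref{S matrix}). The point you flag yourself --- existence and asymptotics of $f_{+}(x,k)$ for this long-range potential, needed to legitimize the Jost representation of $\phi$ away from $k=0,\pm1$ --- is real, but it is resolved exactly as you propose: the paper writes $f_{\pm}$ down explicitly and checks them by differentiation, so your plan can be closed the same way.

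The genuine gap is the first equality in (\ref{c}). Your residue computation proves the second and third equalities --- $\operatorname{Res}\left(T,i\kappa\right)=\operatorname{Res}\left(R,i\kappa\right)$ and $-i\operatorname{Res}\left(T,i\kappa\right)=2\rho/(3\kappa^{2}+1)$ --- but it never connects either quantity to the norming constant $c$, which is an independently defined spectral object: the $L^{2}$ normalization constant of the bound state at $-\kappa^{2}$ (in the standard convention $c=\Vert f_{+}(\cdot,i\kappa)\Vert_{L^{2}(\mathbb{R})}^{-2}$ with $f_{+}(x,i\kappa)\sim e^{-\kappa x}$). That identification is the actual content of the first equality and does not follow ``by residues.'' The paper closes it by invoking the general identity $\operatorname{Res}\left(T,i\kappa_{n}\right)=i(-1)^{n-1}\sqrt{c_{n}^{+}c_{n}^{-}}$ from \cite{AK01} together with $c^{+}=c^{-}$, which is forced by evenness of the potential; alternatively you could compute $\Vert f_{+}(\cdot,i\kappa)\Vert^{2}$ directly from the closed form of $f_{+}$. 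Without one of these steps your argument establishes the formulas for $T$ and $R$ but not the statement about $c$. A smaller wording issue in part (1): you cannot say $H_D$ has ``purely absolutely continuous spectrum'' on $(0,\infty)$ and then list its embedded eigenvalue; the spectral measure of $H_D$ has an a.c. part filling $[0,\infty)$ plus a point mass at $\lambda=1$, and it is precisely that point mass which makes $+1$ an eigenvalue.
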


\begin{proof}
Due to symmetry $m_{-}=m_{+}=m$ it follows from the general theory
\cite{Titchmarsh62} that the eigenvalues of the Schr\"{o}dinger operator
$\mathbb{L}_{Q}$ are the (necessarily simple) poles of $m$ and $1/m$. Thus,
$\mathbb{L}_{Q}$ has one positive bound state $+1$ (the pole of $m\left(
\lambda\right)  $) and one negative bound state $-\kappa^{2}$ (the zero of
$m\left(  \lambda\right)  $). Clearly (\ref{neg BS}) holds. The fact about the
absolutely continuos spectrum also follows from the general theory (as well as
from (2) below) and therefore (1) is proven.

Turn to (2). By a direct computation one verifies that%
\begin{align*}
f_{\pm}\left(  x,k\right)   &  =\left\{  1\pm\left(  \frac{e^{\pm ix}}%
{k+1}-\frac{e^{\mp ix}}{k-1}\right)  \frac{\rho\sin x}{1+\rho\left\vert
x\right\vert -\left(  \rho/2\right)  \sin2\left\vert x\right\vert }\right\}
e^{\pm ikx},\ \ \ \\
\pm x  &  \geq0,
\end{align*}
solve the Schr\"{o}dinger equation $\mathbb{L}_{Q}f=k^{2}f$ for $\pm x\geq0$
if $k\neq\pm1$. Since clearly%
\[
f_{\pm}\left(  x,k\right)  =\left(  1+o\left(  1\right)  \right)  e^{\pm
ikx},\ \ \ x\rightarrow\pm\infty,
\]
we can claim that $f_{\pm}$ are Jost solution corresponding to $\pm\infty$. By
the general formulas (see e.g. \cite{HintonKlausIP1989})%
\begin{subequations}
\begin{equation}
T\left(  k\right)  =\frac{1}{f_{-}\left(  k\right)  f_{+}\left(  k\right)
}\ \frac{2ik}{m_{+}\left(  k^{2}\right)  +m_{-}\left(  k^{2}\right)  }\text{
\ \ (transmission coefficient),} \label{T}%
\end{equation}%
\end{subequations}
\begin{subequations}
\begin{equation}
R\left(  k\right)  =-\frac{\overline{f_{+}\left(  k\right)  }}{f_{+}\left(
k\right)  }\ \frac{\overline{m_{+}\left(  k^{2}\right)  }+m_{-}\left(
k^{2}\right)  }{m_{+}\left(  k^{2}\right)  +m_{-}\left(  k^{2}\right)  }\text{
\ \ (right reflection coefficient),} \label{R}%
\end{equation}%
\end{subequations}
\begin{subequations}
\begin{equation}
L\left(  k\right)  =-\frac{\overline{f_{-}\left(  k\right)  }}{f_{-}\left(
k\right)  }\ \frac{m_{+}\left(  k^{2}\right)  +\overline{m_{-}\left(
k^{2}\right)  }}{m_{+}\left(  k^{2}\right)  +m_{-}\left(  k^{2}\right)
}\text{ \ \ (left reflection coefficient)} \label{L}%
\end{equation}
and $f_{\pm}\left(  k\right)  :=f_{\pm}\left(  0,k\right)  $ are Jost
functions. Since in our case $m_{\pm}=m$ and $f_{\pm}\left(  k\right)  =1$, we
immediately see that $L=R$ and arrive at (\ref{S matrix}).

It remains to demonstrate (\ref{c}). Recall the general fact (see e.g.
\cite{AK01}) that for any short-range $q$%
\end{subequations}
\begin{equation}
\operatorname*{Res}\left(  T,i\kappa_{n}\right)  =i\left(  -1\right)
^{n-1}\sqrt{c_{n}^{+}c_{n}^{-}},\label{norm const general statement}%
\end{equation}
where $c_{n}^{\pm}$ are right/left norming constant associated with the bound
states $-\kappa_{n}^{2}$ ($n=1,2,...)$ enumerated in the increasing order. If
$q$ is even then $c_{n}^{+}=c_{n}^{-}=c_{n}$ and hence in our case of a single
bound state $-\kappa^{2}$ we have%
\[
\operatorname*{Res}\left(  T,i\kappa\right)  =ic
\]
and the first equation in (\ref{c}) follows. The second and third equations in
(\ref{c}) can be verified by a direct computation.
\end{proof}

\begin{remark}
Same way as we did in the proof, one can find an analog of Theorem
\ref{Thm on Q} for the truncated potentials $Q|\mathbb{R}_{\pm}$. There will
be no positive bound state but the formulas (\ref{T})-(\ref{L}) immediately
yield same (\ref{TLR}) where $2\rho$ is replaced with $\rho$. Indeed, for
$Q|\mathbb{R}_{+}$ we have%
\[
m_{+}\left(  k^{2}\right)  =m\left(  k^{2}\right)  =ik+\frac{2\rho}{1-k^{2}%
},\ \ \ m_{-}\left(  k^{2}\right)  =ik,\ \ \ f_{\pm}\left(  k\right)  =1,
\]
and the claim follows. Moreover, (\ref{c}) also holds for the truncated $Q$
with the same substitution. This demonstrates clearly that the standard triple
$\left(  R,\kappa,c\right)  $ no longer constitutes scattering data.
\end{remark}

\subsection{Short-range approximation of $Q$}

The simples short range approximation is based upon a truncation but the
limiting procedure will not be simple. We instead approximate the scattering
data. While much more complicated than truncation, the limiting procedure
becomes easier to track.

If you recall the famous characterization of the scattering matrix
\cite{MarchBook2011} of a short-range potential, one of the conditions is that
$T\left(  k\right)  $ can vanish on $\overline{\mathbb{C}^{+}}$ only at $k=0$.
But in our case this occurs if $P\left(  k\right)  =0$ which happens also for
$k=\pm1$. This prompts to replace $P\left(  k\right)  $ in $T\left(  k\right)
$ given by (\ref{TLR}) with $P\left(  k\right)  +i\varepsilon$ with some small
$\varepsilon>0$. Clearly%
\[
P\left(  k\right)  +i\varepsilon=k^{3}-k+i\varepsilon=\left(  k-\mu
_{\varepsilon}\right)  \left(  k+\overline{\mu_{\varepsilon}}\right)  \left(
k-i\nu_{\varepsilon}\right)  ,
\]
where%
\begin{equation}
\mu_{\varepsilon}=1-i\varepsilon/2+O\left(  \varepsilon^{2}\right)
,\ \ \nu_{\varepsilon}=\varepsilon+O\left(  \varepsilon^{2}\right)
,\ \ \ \varepsilon\rightarrow0.\label{zeros below}%
\end{equation}
Thus two real zeros $\pm1$ move to $\mathbb{C}^{-}$. Form the Blaschke product
$B^{\varepsilon}$ with zeros $z_{-1}=-\mu,z_{1}=\overline{\mu},z_{0}=i\nu
\in\mathbb{C}^{+}$. I.e.,%
\[
B^{\varepsilon}=b_{-1}b_{0}b_{1},\ b_{n}\left(  k\right)  =\frac{k-z_{n}%
}{k-\overline{z_{n}}}\ .\
\]
It follows from (\ref{zeros below}) that as $\varepsilon\rightarrow0$%
\[
z_{n}=n+i\varepsilon/2^{\left\vert n\right\vert }+O\left(  \varepsilon
^{2}\right)  ,\ \ n=0,\pm1.
\]
The Blaschke product $B^{\varepsilon}$ will be a building block in our
approximation. Apparently, $B^{\varepsilon}\rightarrow1$ as $\varepsilon
\rightarrow0$ uniformly on compacts in $\mathbb{C}^{+}$ and a.e. on
$\mathbb{R}$. We are now ready to present our approximation.

\begin{theorem}
\label{Thm on approx}Let ($\varepsilon>0$)%
\begin{align}
T_{\varepsilon}\left(  k\right)   &  =\frac{\left(  P\left(  k\right)
+i\varepsilon\right)  ^{2}/b_{0}^{2}\left(  k\right)  }{P\left(  k\right)
+i\rho\left(  1+a\right)  }\frac{1}{P\left(  k\right)  +i\rho\left(
1-a\right)  },\ \ \ \nonumber\\
R_{\varepsilon}\left(  k\right)   &  =\frac{-2ia\rho}{P\left(  k\right)
+i\rho\left(  1+a\right)  }\frac{P\left(  k\right)  }{P\left(  k\right)
+i\rho\left(  1-a\right)  }\frac{1}{B^{\varepsilon}\left(  k\right)
},\label{R_epsilon}\\
a  &  :=\sqrt{1-\left(  \varepsilon/\rho\right)  ^{2}}.\nonumber
\end{align}
Then

\begin{enumerate}
\item The matrix\bigskip%
\[
S_{\varepsilon}=\left(
\begin{array}
[c]{cc}%
T_{\varepsilon} & R_{\varepsilon}\\
R_{\varepsilon} & T_{\varepsilon}%
\end{array}
\right)
\]
is the scattering matrix of a short-range potential having two bound states
$-\left(  \kappa_{\pm}^{\varepsilon}\right)  ^{2},\ \kappa_{+}^{\varepsilon
}>\kappa_{-}^{\varepsilon}$, subject to%
\begin{equation}
\kappa_{+}^{\varepsilon}=\kappa+O\left(  \varepsilon^{2}\right)
,\ \ \ \kappa_{-}^{\varepsilon}=O\left(  \varepsilon^{2}\right)
,\ \ \varepsilon\rightarrow0. \label{neg BS of approx}%
\end{equation}

\item If we choose the left and right norming constants associated with
$-\left(  \kappa_{\pm}^{\varepsilon}\right)  ^{2}$ equal to each other and to
satisfy%
\begin{equation}
c_{\pm}^{\varepsilon}=\mp i\operatorname*{Res}\left(  T_{\varepsilon}%
,i\kappa_{\pm}^{\varepsilon}\right)  , \label{norm constants for approx}%
\end{equation}
then the unique potential $Q_{\varepsilon}\left(  x\right)  $ corresponding to
the scattering data%
\[
\left\{  R_{\varepsilon},\kappa_{\pm}^{\varepsilon},c_{\pm}^{\varepsilon
}\right\}
\]
is even and everywhere%
\begin{equation}
Q_{\varepsilon}\left(  x\right)  \rightarrow Q\left(  x\right)
,\ \ \ \varepsilon\rightarrow0. \label{lim pot}%
\end{equation}

\end{enumerate}
\end{theorem}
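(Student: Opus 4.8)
The plan is to establish the two assertions in turn: first that $S_\varepsilon$ satisfies the Faddeev--Marchenko characterization of a short-range scattering matrix, and then that the inverse problem with the prescribed norming constants reconstructs an even potential $Q_\varepsilon\to Q$. For (1) I would begin by simplifying $T_\varepsilon$. Since $P+i\varepsilon=(k-\mu_\varepsilon)(k+\overline{\mu_\varepsilon})(k-i\nu_\varepsilon)$ while $b_0=(k-i\nu_\varepsilon)/(k+i\nu_\varepsilon)$, the quotient $\Pi:=(P+i\varepsilon)/b_0=(k-\mu_\varepsilon)(k+\overline{\mu_\varepsilon})(k+i\nu_\varepsilon)$ is a polynomial all of whose zeros lie in $\mathbb{C}^-$, so that
\[
T_\varepsilon=\frac{\Pi^{2}}{D_+D_-},\qquad D_\pm:=P+i\rho(1\pm a),
\]
is rational, has no zeros in $\overline{\mathbb{C}^+}$, and has exactly two simple poles there. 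Using $\rho(1+a)=2\rho+O(\varepsilon^{2})$ and $\rho(1-a)=O(\varepsilon^{2})$, a perturbation of $P(i\kappa)+2i\rho=0$ (i.e.\ of $\kappa^{3}+\kappa=2\rho$) and of $P(0)=0$ places the two poles on $i(0,\infty)$ and yields (\ref{neg BS of approx}). The remaining hypotheses are routine: reality $\overline{T_\varepsilon(k)}=T_\varepsilon(-k)$, $\overline{R_\varepsilon(k)}=R_\varepsilon(-k)$ follow from $\Pi(-k)=-\overline{\Pi(k)}$, $D_\pm(-k)=-\overline{D_\pm(k)}$ and $B^\varepsilon(-k)=\overline{B^\varepsilon(k)}$ on $\mathbb{R}$; the normalization $T_\varepsilon(\infty)=1$, $R_\varepsilon(\infty)=0$ is immediate; and $R_\varepsilon$ is smooth with $R_\varepsilon=O(k^{-3})$ on $\mathbb{R}$, since $\operatorname{Im}D_\pm=\rho(1\pm a)>0$ leaves $D_\pm$ zero-free there while $|B^\varepsilon|=1$. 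These place $R_\varepsilon$ in the Faddeev class, so \cite{MarchBook2011} identifies $S_\varepsilon$ as the scattering matrix of a unique short-range potential with the two stated bound states.

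The one genuinely informative step in (1) is unitarity. On $\mathbb{R}$ one has $|b_0|=|B^\varepsilon|=1$, so
\[
|T_\varepsilon|^{2}+|R_\varepsilon|^{2}=\frac{(P^{2}+\varepsilon^{2})^{2}+4a^{2}\rho^{2}P^{2}}{\bigl(P^{2}+\rho^{2}(1+a)^{2}\bigr)\bigl(P^{2}+\rho^{2}(1-a)^{2}\bigr)},
\]
and comparing the two sides shows the identity $|T_\varepsilon|^{2}+|R_\varepsilon|^{2}=1$ to be equivalent to $a^{2}=1-(\varepsilon/\rho)^{2}$, precisely the definition of $a$; this is the structural reason the construction closes up. The off-diagonal relation $\operatorname{Re}(R_\varepsilon/T_\varepsilon)=0$ then follows because $R_\varepsilon/T_\varepsilon=-2ia\rho P/(\Pi^{2}B^\varepsilon)$ with $\Pi^{2}B^\varepsilon=(k^{2}-\mu_\varepsilon^{2})(k^{2}-\overline{\mu_\varepsilon}^{2})(k^{2}+\nu_\varepsilon^{2})$ real on $\mathbb{R}$, so the ratio is purely imaginary and $S_\varepsilon$ is unitary.

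For (2), the invariance of the data under $k\mapsto-k$ together with the choice $L_\varepsilon=R_\varepsilon$ and the symmetric norming constants (\ref{norm constants for approx}), whose signs match the general relation (\ref{norm const general statement}) and so keep $c_\pm^\varepsilon$ positive, forces $Q_\varepsilon$ to be even. I would then reconstruct $Q_\varepsilon=-2\partial_x^{2}\log\tau_\varepsilon$ from the right Marchenko equation, writing $\tau_\varepsilon$ as a Fredholm determinant of a Hankel operator whose symbol is $e^{2ikx}R_\varepsilon$ augmented by the two poles at $i\kappa_\pm^\varepsilon$ with residues fixed by (\ref{norm constants for approx}). Lemma \ref{on split} splits this symbol into its rational (bound-state) and entire (reflection) parts, and Corollary \ref{smooth elements} supplies the uniform control of the entire part needed to pass to the limit: as $\varepsilon\to0$ one has $R_\varepsilon\to R$ a.e., $B^\varepsilon\to1$, $i\kappa_+^\varepsilon\to i\kappa$ and $i\kappa_-^\varepsilon\to0$, and I would show the associated Hankel operators converge in trace norm, whence $\tau_\varepsilon\to\tau_0$ in $C^{2}$ on compacts and $Q_\varepsilon\to Q$ everywhere.

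The hard part is exactly this limit, because the convergence of the reflection coefficient is not uniform: $R_\varepsilon(\pm1)=0$ while $R(\pm1)=-1$, and the Blaschke zeros $\overline{\mu_\varepsilon},-\mu_\varepsilon$ coalesce onto $\pm1\in\mathbb{R}$ while the auxiliary bound state $i\kappa_-^\varepsilon$ dissolves into the edge of the continuous spectrum at $k=0$. These singular contributions are precisely the mechanism by which the embedded bound state $+1$ of $Q$ is manufactured in the limit, and the crux is to prove that they conspire to reproduce exactly $\tau_0(x)=1+\rho x-(\rho/2)\sin2x$ and no spurious extra factor. The uniform estimates on the entire part of the symbol from Corollary \ref{smooth elements}, exploited along a contour $\mathbb{R}+ih$ held away from the coalescing real points, are what I expect to make this controllable.
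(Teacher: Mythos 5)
Your part 1 is sound --- in fact more complete than the paper, which declares the Marchenko characterization check ``straightforward but quite involved'' and omits it; your unitarity identity and the location of the poles of $T_{\varepsilon}$ match what the paper asserts. Your setup for part 2 (the symbol built from $e^{2ikx}R_{\varepsilon}$ plus the two bound-state poles with residues fixed by (\ref{norm constants for approx}), Lemma \ref{on split}, evenness from the symmetry of the data, the Dyson-type determinant) also coincides with the paper's. The genuine gap is your limit argument. You propose to ``show the associated Hankel operators converge in trace norm, whence $\tau_{\varepsilon}\to\tau_{0}$''; both halves of this are false, and the paper says so explicitly: $\mathbb{H}(\varphi_{x,t}^{\varepsilon})$ does \emph{not} converge in trace norm to $\mathbb{H}(\varphi_{x,t})$, and the determinant $\tau_{\varepsilon}=\det\left(I+\mathbb{H}(\varphi_{x}^{\varepsilon})\right)$ tends to $0$, not to $\tau_{0}$. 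The obstruction is exactly the mechanism you name at the end: the three poles $z_{0},z_{\pm1}$ of the symbol coalesce onto the real axis, so the residues are $O(\varepsilon)$ while $\left\Vert k_{z_{n}}\right\Vert ^{2}\sim\varepsilon^{-1}$; the three rank-one summands therefore have norms of order one, converge to zero strongly but not in operator (let alone trace) norm, and the determinant is not continuous along this convergence. Worse for your plan, at $t=0$, $x>0$ the entire part $\Phi$ of the symbol vanishes identically (Corollary \ref{corollary on fi=0}), so the whole operator \emph{is} this singular rank-three piece; the contour estimates of Corollary \ref{smooth elements}, which you invoke as the controlling tool, govern only the $\Phi$-part and give nothing here.

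What is actually needed --- and what the paper does, deferring the proof of (\ref{lim pot}) to the proof of Theorem \ref{main thm} --- is a splitting of the determinant rather than a passage to the limit under it: decompose $H^{2}=K_{\varepsilon}\oplus B^{\varepsilon}H^{2}$ as in (\ref{orthog decomp}), apply the block-determinant formula (\ref{block mat}), prove trace-norm convergence only for the compression $\mathbb{H}_{1}$ to $B^{\varepsilon}H^{2}$ (via \cite{BotSil02}, using strong convergence $\mathbb{P}_{B^{\varepsilon}}^{\bot}\rightarrow I$ together with trace-norm convergence of $\mathbb{H}(\Phi_{x,t}^{\varepsilon})$), and compute the complementary $3\times3$ determinant asymptotically by hand. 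That computation shows this factor does not converge at all but degenerates linearly: at $t=0$, $x>0$ it equals $\frac{4\varepsilon}{\rho}\left[\tau_{0}(x)+O(\varepsilon)\right]$. The prefactor $4\varepsilon/\rho$ is $x$-independent and is annihilated by $-2\partial_{x}^{2}\log$, and only then does $Q_{\varepsilon}\rightarrow Q$ follow. So $\tau_{0}$ arises as the \emph{coefficient} of a vanishing determinant, not as its limit; without this factorization, or some equivalent extraction of the $\varepsilon$-prefactor carried by the coalescing poles, your argument stalls precisely at the step you yourself identify as the crux.
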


\begin{proof}
To prove part 1 of the statement one needs to check all the conditions of the
Marchenko characterization \cite{MarchBook2011}. Is is straightforward but
quite involved and we omit it. By the general theory, the bound states are the
squares of the (simple) poles of $T_{\varepsilon}$ in $\mathbb{C}^{+}$, i.e.
the solutions of two%
\[
P\left(  k\right)  +i\rho\left(  1\pm a\right)  =0,\ B^{\varepsilon}\left(
k\right)  =0.
\]
Since each equation has only one imaginary solution $i\kappa_{\pm
}^{\varepsilon}$, we have exactly two bound states $-\left(  \kappa_{\pm
}^{\varepsilon}\right)  ^{2}$ which are clearly subject to
(\ref{neg BS of approx}). Note that $z_{0}$, the zero of $b_{0}$, is a
removable singularity by our very construction of $B^{\varepsilon}$.

Turn now to part 2. Consider the reflection coefficient $R_{\varepsilon}$.
Apparently, $R_{\varepsilon}$ is a rational function with five simple poles.
Two imaginary poles $i\kappa_{\pm}^{\varepsilon}$ are shared with
$T_{\varepsilon}$ plus $z_{n},n=0,\pm1$, the zeros of $B^{\varepsilon}\left(
k\right)  $. By a direct computation, one checks%
\begin{align}
\operatorname*{Res}\left(  R_{\varepsilon},i\kappa_{\pm}^{\varepsilon}\right)
&  =\pm\operatorname*{Res}\left(  T_{\varepsilon},i\kappa_{\pm}^{\varepsilon
}\right)  \nonumber\\
&  =ic_{\pm}^{\varepsilon}\text{ (since (\ref{norm constants for approx}%
)).}\label{+-}%
\end{align}
We now solve the inverse scattering problem for the data
\[
\left\{  R_{\varepsilon},\kappa_{\pm}^{\varepsilon},-i\operatorname*{Res}%
\left(  R_{\varepsilon},i\kappa_{\pm}^{\varepsilon}\right)  \right\}  ,
\]
basing upon our Hankel operator approach \cite{GruRybSIMA15}. To this end,
form the symbol%
\begin{equation}
\varphi_{x}^{\varepsilon}\left(  k\right)  =\frac{-\operatorname*{Res}\left(
R_{\varepsilon},i\kappa_{+}^{\varepsilon}\right)  }{k-i\kappa_{+}%
^{\varepsilon}}e^{-2\kappa_{+}^{\varepsilon}x}+\frac{-\operatorname*{Res}%
\left(  R_{\varepsilon},i\kappa_{-}^{\varepsilon}\right)  }{k-i\kappa
_{-}^{\varepsilon}}e^{-2\kappa_{-}^{\varepsilon}x}+R_{\varepsilon}\left(
k\right)  e^{2ikx}.\label{fi^epsilon}%
\end{equation}
One immediately sees that $\varphi_{x}^{\varepsilon}$ is subject to the
conditions of Lemma \ref{on split} with three (symmetric) poles $z_{n}%
,n=0,\pm1$. By condition, the left and right scattering data are identical and
hence $Q_{\varepsilon}$ must be even and it enough to recover it only on
$\left(  0,\infty\right)  $. Therefore we can assume that $x>0$ in
(\ref{fi^epsilon}) which by Corollary \ref{corollary on fi=0} implies that the
$\Phi$-part of our symbol is zero. By Lemma \ref{on split}
\[
\mathbb{H}(\varphi_{x}^{\varepsilon})=%
%TCIMACRO{\dsum _{-1\leq n\leq1}}%
%BeginExpansion
{\displaystyle\sum_{-1\leq n\leq1}}
%EndExpansion
i\operatorname*{Res}\left(  \varphi_{x}^{\varepsilon},z_{-n}\right)
\left\langle \cdot,k_{z_{-n}}\right\rangle k_{z_{n}}.
\]
Thus, our Hankel operator is rank 3 and by the Dyson formula
\cite{GruRybSIMA15} we have%
\[
Q_{\varepsilon}\left(  x\right)  =-2\partial_{x}^{2}\log\det\left(
I+\mathbb{H}(\varphi_{x}^{\varepsilon})\right)  ,x>0.
\]
Note that our $Q_{\varepsilon}$ has an exponential decay and can be explicitly
evaluated. We however don't really need it. We will take the limit as
$\varepsilon\rightarrow0$ in the next section.
\end{proof}

We emphasize that Part 2 of Theorem \ref{Thm on approx} is essential because,
due to nonuniqueness, it is a priori unclear if our approximations indeed
converges to the original potential.

Note also, that $Q_{\varepsilon}\left(  x\right)  $ all have the property that
$T_{\varepsilon}\left(  0\right)  \neq0$. Such potentials are called
exceptional because generically $T\left(  0\right)  =0.$

\section{Main results\label{main results}}

Through this section
\[
\xi_{x,t}(k)=\exp\{i(8k^{3}t+2kx)\}.
\]

\begin{theorem}
\label{main thm}Let $Q$ be the initial condition (\ref{our Q}) in the KdV
equation (\ref{KdV}),%
\[
\varphi_{x,t}\left(  k\right)  =R\left(  k\right)  \xi_{x,t}(k)-\frac
{\operatorname*{Res}\left(  R\xi_{x,t},i\kappa\right)  }{k-i\kappa},
\]
and $\mathbb{H}_{x,t}:=\mathbb{H}\left(  \varphi_{x,t}\right)  $, the
associated Hankel operator. Then (\ref{KdV}) has the (unique) classical
solution given by%
\begin{equation}
u\left(  x,t\right)  =u_{0}\left(  x,t\right)  +u_{1}\left(  x,t\right)
\label{u}%
\end{equation}
where%
\begin{equation}
u_{0}\left(  x,t\right)  =-2\partial_{x}^{2}\log\det\left\{  I+\mathbb{H}%
_{x,t}\right\}  ,\label{u0}%
\end{equation}
and%
\[
u_{1}\left(  x,t\right)  =-2\partial_{x}^{2}\log\tau\left(  x,t\right)  ,
\]%
\begin{align*}
\tau\left(  x,t\right)   &  =1+\rho\left(  x+12t\right)  -\frac{\rho}{2}%
\sin\left(  2x+8t\right)  \\
&  +\frac{\rho}{2}\operatorname{Re}\left.  \left(  I+\mathbb{H}_{x,t}\right)
^{-1}\left(  \mathbb{H}_{x,t}k_{1+i0}-\xi_{x,t}\left(  1\right)
\mathbb{H}_{x,t}k_{-1+i0}\right)  \right\vert _{1+i0}.
\end{align*}
Here, as before,$\ k_{\lambda}\left(  s\right)  =\dfrac{i}{s-\overline
{\lambda}}$ is the reproducing kernel.
\end{theorem}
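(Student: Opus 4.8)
The plan is to realize $u(x,t)$ as the $\varepsilon\to0$ limit of the genuinely short-range problems whose initial data $Q_\varepsilon$ are furnished by Theorem \ref{Thm on approx}, for which the Dyson determinant formula is already legitimate. First I would propagate the short-range scattering data of $Q_\varepsilon$ by the standard KdV laws (the reflection coefficient acquires the factor $\xi_{x,t}$ and every norming term the matching cubic phase), so that the time-dependent symbol is the $t$-analogue of \eqref{fi^epsilon},
\[
\varphi_{x,t}^{\varepsilon}(k)=-\sum_{\pm}\frac{\operatorname*{Res}\left(R_{\varepsilon}\xi_{x,t},i\kappa_{\pm}^{\varepsilon}\right)}{k-i\kappa_{\pm}^{\varepsilon}}+R_{\varepsilon}(k)\xi_{x,t}(k),
\]
and, exactly as in the proof of Theorem \ref{Thm on approx}, the (even) KdV solution with datum $Q_\varepsilon$ is $u_{\varepsilon}(x,t)=-2\partial_x^2\log\det(I+\mathbb{H}(\varphi_{x,t}^{\varepsilon}))$ for $x>0$. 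Since $Q_\varepsilon\to Q$ by \eqref{lim pot} and \eqref{KdV} is $L^2$ well posed by \cite{Bourgain93}, continuous dependence on the initial data identifies $\lim_{\varepsilon\to0}u_\varepsilon$ with the desired solution $u$; the entire task reduces to evaluating this limit.

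The key structural point is that the two subtracted terms in $\varphi_{x,t}^{\varepsilon}$ cancel exactly the poles of $R_\varepsilon\xi_{x,t}$ at $i\kappa_{\pm}^{\varepsilon}$, so by Lemma \ref{on split}
\[
\mathbb{H}(\varphi_{x,t}^{\varepsilon})=\mathbb{H}(\phi^{\varepsilon})+\mathbb{H}(\Phi^{\varepsilon}),
\]
where $\mathbb{H}(\phi^{\varepsilon})$ is the rank-three operator \eqref{H(small fi)} carried by the three poles $z_0,z_{\pm1}$ of $B^{\varepsilon}$ and $\mathbb{H}(\Phi^{\varepsilon})$ is the trace-class continuous part built from $R_\varepsilon\xi_{x,t}$. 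Next I would factor the Fredholm determinant as
\[
\det(I+\mathbb{H}(\varphi_{x,t}^{\varepsilon}))=\det(I+\mathbb{H}(\Phi^{\varepsilon}))\cdot\det\left(I+(I+\mathbb{H}(\Phi^{\varepsilon}))^{-1}\mathbb{H}(\phi^{\varepsilon})\right),
\]
reducing the second factor to a $3\times3$ determinant whose entries are inner products of $(I+\mathbb{H}(\Phi^{\varepsilon}))^{-1}k_{z_n}$ against the kernels $k_{z_m}$, using the rank-one pieces of \eqref{H(small fi)} and the bi-orthogonal basis of Lemma \ref{Blaschke lemma}.

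Letting $\varepsilon\to0$ I would handle the two factors separately. Because $R_\varepsilon\to R$ on $\mathbb{R}+ih$, one has $\Phi^{\varepsilon}\to\Phi$; and since the subtraction has already stripped $R\xi_{x,t}$ of its only $\mathbb{C}^+$ pole $i\kappa$, the limiting symbol $\varphi_{x,t}$ carries only the continuous spectrum, whence $\mathbb{H}(\Phi^{\varepsilon})\to\mathbb{H}_{x,t}$ and $\det(I+\mathbb{H}(\Phi^{\varepsilon}))\to\det(I+\mathbb{H}_{x,t})$, producing exactly $u_0$ in \eqref{u0}. For the $3\times3$ factor the poles migrate to the real axis, $z_{\pm1}\to\pm1$ and $z_0,i\kappa_-^{\varepsilon}\to0$, so the kernels $k_{z_n}$ leave $H^2$; I would pass to the boundary kernels $k_{\pm1+i0}$ and use Corollary \ref{smooth elements} to interpret $(I+\mathbb{H}_{x,t})^{-1}\mathbb{H}_{x,t}k_{\pm1+i0}$ as honest smooth $H^2$ elements evaluated at $1+i0$. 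Collecting surviving entries, the $3\times3$ determinant should converge to $\tau(x,t)$: the pair pinching the real axis at $\pm1$, together with $\xi_{x,t}(\pm1)$, reconstitutes the oscillation $-\tfrac{\rho}{2}\sin(2x+8t)$ and the coupling $\tfrac{\rho}{2}\operatorname{Re}[\cdots]_{1+i0}$ responsible for the embedded eigenvalue $+1$, while the pair coalescing at the origin reconstitutes the background $1+\rho(x+12t)$, yielding $u_1=-2\partial_x^2\log\tau$.

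The hard part will be the last step: the $3\times3$ determinant is an indeterminate form, since each $k_{z_n}$ has norm $\sim(\operatorname{Im}z_n)^{-1/2}\to\infty$ while the residues $\operatorname*{Res}(R_\varepsilon\xi_{x,t},z_n)$ and the bi-orthogonal normalizations $2\operatorname{Im}z_n/B_n(z_n)$ from \eqref{bi orth} vanish at compensating rates. Extracting the finite limit requires tracking these cancellations to leading order in $\varepsilon$ via \eqref{zeros below} and showing that the singularities at the coalescing pair $z_0,i\kappa_-^{\varepsilon}\to0$ are removable (consistent with $z_0$ being a removable singularity of $B^{\varepsilon}$ noted in Theorem \ref{Thm on approx}). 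A secondary point is to secure trace-norm convergence $\mathbb{H}(\Phi^{\varepsilon})\to\mathbb{H}_{x,t}$ for $t>0$ strong enough to pass the determinant and its two $x$-derivatives to the limit, and finally to check the consistency $u(x,0)=Q(x)$, i.e. that at $t=0$ the extra factor recombines with $\det(I+\mathbb{H}_{x,0})$ into $\tau_0$ of \eqref{tua for t=0}. Uniqueness of the classical solution is then inherited from the $L^2$ well-posedness of \cite{Bourgain93}.
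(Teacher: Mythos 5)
Your overall strategy is the same as the paper's: evolve the short-range approximants $Q_{\varepsilon}$ of Theorem \ref{Thm on approx} by Dyson's formula, factor the Fredholm determinant into an infinite-dimensional piece and a $3\times3$ piece, pass to the limit in each factor separately, and identify the limit with the solution via Bourgain's theorem. Your factorization is a legitimate variant: where the paper decomposes $H^{2}=K_{\varepsilon}\oplus B^{\varepsilon}H^{2}$ and applies the block (Schur complement) formula (\ref{block mat}), you split the operator additively via Lemma \ref{on split} and use $\det\left(  I+A+B\right)  =\det\left(  I+B\right)  \det\left(  I+\left(  I+B\right)  ^{-1}A\right)  $ with $A$ of rank three. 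For the first factor your route is even slightly simpler: trace-norm convergence $\mathbb{H}\left(  \Phi_{x,t}^{\varepsilon}\right)  \rightarrow\mathbb{H}_{x,t}$ follows directly from $\Phi_{x,t}^{\varepsilon}\rightarrow\Phi_{x,t}$ together with Corollary \ref{corollary on analytic fi}, with no need for the strong convergence of projections and the B\"{o}tcher--Silbermann lemma that the paper invokes to handle $\mathbb{H}_{1}$.

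However, your treatment of the $3\times3$ factor contains a genuine error, and it sits exactly at the heart of the theorem. That determinant does \emph{not} converge to $\tau\left(  x,t\right)  $; it converges to zero. The $0/0$ cancellations you point to (reproducing-kernel norms versus residues and bi-orthogonal normalizations) only make the individual matrix \emph{entries} finite; the resulting limit matrix is singular, because its block corresponding to the pair $z_{\pm1}\rightarrow\pm1$ tends to
\[
\left(
\begin{array}
[c]{cc}
1 & h\\
\overline{h} & 1
\end{array}
\right)  ,\qquad\left\vert h\right\vert =\left\vert R\left(  1\right)
\xi_{x,t}\left(  1\right)  \right\vert =1,
\]
whose determinant vanishes precisely because $\left\vert R\left(  1\right)  \right\vert =1$ --- total reflection at the energy of the embedded eigenvalue. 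This degeneracy is how the bound state $+1$ manifests itself. Consequently the $3\times3$ determinant is $O\left(  \varepsilon\right)  $, and $\tau$ appears only as the coefficient of the leading term, $\frac{\rho}{4\varepsilon}\det\left\{  \cdots\right\}  \rightarrow\tau\left(  x,t\right)  $ (cf. (\ref{piece of det})); the argument then closes only because the offending factor $4\varepsilon/\rho$ is independent of $x$ and is therefore annihilated by $-2\partial_{x}^{2}\log$. Extracting $\tau$ requires the next-order expansion of every entry, which your sketch does not perform and partly misattributes: the background $1+\rho\left(  x+12t\right)  -\frac{\rho}{2}\sin\left(  2x+8t\right)  $ does \emph{not} come from the pair coalescing at the origin (that entry contributes only the constant factor $2$); it comes from the $\pm1$ pair, through the expansion of $1-\left\vert h_{-11}\right\vert ^{2}+2\operatorname{Re}h_{-1-1}$ in which the phase derivative $\left(  R\xi_{x,t}\right)  ^{\prime}\left(  1\right)  $ produces the growth $x+12t$ and the oscillation (see (\ref{h part})), while the term $\frac{\rho}{2}\operatorname{Re}\left(  I+\mathbb{H}_{x,t}\right)  ^{-1}\left(  \cdots\right)  $ arises from the order-$\varepsilon$ part of the corrections $d_{mn}$ evaluated through Corollary \ref{smooth elements}. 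Without recognizing that the limit is an $O\left(  \varepsilon\right)  $ degeneration rather than a finite limit, your plan stalls exactly where the formula for $\tau$ must be produced.
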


\begin{proof}
Since our approximation $Q_{\varepsilon}\left(  x\right)  $ decays
exponentially, the (classical) solution to the KdV equation can be found in
closed form by Dyson's formula%
\begin{equation}
Q_{\varepsilon}\left(  x,t\right)  =-2\partial_{x}^{2}\log\det\left(
I+\mathbb{H}\left(  \varphi_{x,t}^{\varepsilon}\right)  \right)
,\label{q sub epsilon}%
\end{equation}
where%
\begin{align}
\varphi_{x,t}^{\varepsilon}\left(  k\right)   &  =\frac{-\operatorname*{Res}%
\left(  R_{\varepsilon},i\kappa_{+}^{\varepsilon}\right)  }{k-i\kappa
_{+}^{\varepsilon}}\xi_{x,t}\left(  i\kappa_{+}^{\varepsilon}\right)
+\frac{-\operatorname*{Res}\left(  R_{\varepsilon},i\kappa_{-}^{\varepsilon
}\right)  }{k-i\kappa_{-}^{\varepsilon}}\xi_{x,t}\left(  i\kappa
_{-}^{\varepsilon}\right)  \label{fi epsilon}\\
&  +R_{\varepsilon}\left(  k\right)  \xi_{x,t}\left(  k\right)  .\nonumber
\end{align}
Note that due to the Bourgain theorem \cite{Bourgain93}\ the limit
$\lim_{\varepsilon\rightarrow0}Q_{\varepsilon}\left(  x,t\right)  $ does exist
but we cannot pass to the limit in (\ref{q sub epsilon}) under the determinant
sign since, as we will see later, $\mathbb{H}\left(  \varphi_{x,t}%
^{\varepsilon}\right)  $ doesn't converge in the trace norm to $\mathbb{H}%
\left(  \varphi_{x,t}\right)  $, where%
\[
\varphi_{x,t}\left(  k\right)  =\frac{-\operatorname*{Res}\left(
R,i\kappa\right)  }{k-i\kappa}\xi_{x,t}\left(  i\kappa\right)  +R\left(
k\right)  \xi_{x,t}\left(  k\right)  .
\]
To detour this circumstance we split our determinant as follows. Consider%
\[
K_{\varepsilon}=\operatorname*{span}\left\{  k_{z_{n}}\right\}  _{n=-1}^{1},
\]
and decompose $H^{2}$ into the orthogonal sum (see Subsection
\ref{rep kernels})%
\begin{equation}
H^{2}=K_{\varepsilon}\oplus K_{\varepsilon}^{\bot},\ \ \ K_{\varepsilon}%
^{\bot}=B^{\varepsilon}H^{2}.\label{orthog decomp}%
\end{equation}
The decomposition (\ref{orthog decomp}) induces the block representation%
\[
\mathbb{H}\left(  \varphi_{x,t}^{\varepsilon}\right)  =\left(
\begin{array}
[c]{cc}%
\mathbb{H}_{0} & \mathbb{H}_{01}\\
\mathbb{H}_{01}^{\ast} & \mathbb{H}_{1}%
\end{array}
\right)  ,
\]
where%
\begin{align*}
\mathbb{H}_{0} &  :=\mathbb{P}_{B^{\varepsilon}}\mathbb{H}\left(
\varphi_{x,t}^{\varepsilon}\right)  \mathbb{P}_{B^{\varepsilon}}%
,\ \ \mathbb{H}_{1}=\mathbb{P}_{B^{\varepsilon}}^{\bot}\mathbb{H}\left(
\varphi_{x,t}^{\varepsilon}\right)  \mathbb{P}_{B^{\varepsilon}}^{\bot}\\
\mathbb{H}_{01} &  :=\mathbb{P}_{B^{\varepsilon}}^{\bot}\mathbb{H}\left(
\varphi_{x,t}^{\varepsilon}\right)  \mathbb{P}_{B^{\varepsilon}}%
,\ \ \mathbb{H}_{01}^{\ast}=\mathbb{P}_{B^{\varepsilon}}\mathbb{H}\left(
\varphi_{x,t}^{\varepsilon}\right)  \mathbb{P}_{B^{\varepsilon}}^{\bot},
\end{align*}
and%
\begin{align*}
\varphi_{x,t}^{\varepsilon}\left(  k\right)   &  =\frac{-\operatorname*{Res}%
\left(  R_{\varepsilon},i\kappa_{+}^{\varepsilon}\right)  }{k-i\kappa
_{+}^{\varepsilon}}\xi_{x,t}\left(  i\kappa_{+}^{\varepsilon}\right)
+\frac{-\operatorname*{Res}\left(  R_{\varepsilon},i\kappa_{-}^{\varepsilon
}\right)  }{k-i\kappa_{-}^{\varepsilon}}\xi_{x,t}\left(  i\kappa
_{-}^{\varepsilon}\right)  \\
&  +R_{\varepsilon}\left(  k\right)  \xi_{x,t}\left(  k\right)  .
\end{align*}
Examine the block $\mathbb{H}_{1}$ first. It follows from (\ref{fi epsilon})
that the poles of $\varphi_{x,t}^{\varepsilon}$ coincide with zeros $\left(
z_{n}\right)  $ of $B^{\varepsilon}$ and therefore by Lemma \ref{on split}
($h>\kappa_{+}^{\varepsilon}$)%
\begin{align*}
\mathbb{H}\left(  \varphi_{x,t}^{\varepsilon}\right)   &  =%
%TCIMACRO{\dsum _{-1\leq n\leq1}}%
%BeginExpansion
{\displaystyle\sum_{-1\leq n\leq1}}
%EndExpansion
i\operatorname*{Res}\left(  \xi_{x,t}R_{\varepsilon},z_{-n}\right)
\left\langle \cdot,k_{z_{-n}}\right\rangle k_{z_{n}}\\
&  +\int_{\mathbb{R}+ih}\frac{dz}{2\pi}\varphi_{x,t}^{\varepsilon}\left(
z\right)  \left\langle \cdot,k_{z}\right\rangle k_{-\overline{z}}\\
&  =%
%TCIMACRO{\dsum _{-1\leq n\leq1}}%
%BeginExpansion
{\displaystyle\sum_{-1\leq n\leq1}}
%EndExpansion
i\operatorname*{Res}\left(  \xi_{x,t}R_{\varepsilon},z_{-n}\right)
\left\langle \cdot,k_{z_{-n}}\right\rangle k_{z_{n}}+\mathbb{H}\left(
\Phi_{x,t}^{\varepsilon}\right)  .
\end{align*}
One immediately sees that%
\[
\mathbb{H}_{1}=\mathbb{P}_{B^{\varepsilon}}^{\bot}\mathbb{H}\left(  \Phi
_{x,t}^{\varepsilon}\right)  \mathbb{P}_{B^{\varepsilon}}^{\bot}.
\]
Since $\varphi_{x,t}^{\varepsilon}\rightarrow\varphi_{x,t}$ uniformly on
$\mathbb{R}+ih$, we obviously have
\begin{align*}
\Phi_{x,t}^{\varepsilon}\left(  x\right)   &  =-\frac{1}{2\pi i}%
\int_{\mathbb{R}+ih}\frac{\varphi_{x,t}^{\varepsilon}\left(  s\right)  }%
{s-x}\ ds\\
&  \rightarrow-\frac{1}{2\pi i}\int_{\mathbb{R}+ih}\frac{\varphi_{x,t}\left(
s\right)  }{s-x}\ ds=\Phi_{x,t}\left(  x\right)  ,\ \ \ \varepsilon
\rightarrow0,
\end{align*}
in $C^{n}\left(  \mathbb{R}\right)  $ for any $n$ which in turn implies
\cite{Peller2003} that $\lim_{\varepsilon\rightarrow0}\mathbb{H}\left(
\Phi_{x,t}^{\varepsilon}\right)  =\mathbb{H}\left(  \Phi_{x,t}\right)  $ in
the trace norm (in fact in all $\mathfrak{S}_{p},p>0$). Since%
\[
\varphi_{x,t}\left(  k\right)  =\frac{-\operatorname*{Res}\left(
R,i\kappa\right)  }{k-i\kappa}\xi_{x,t}\left(  i\kappa\right)  +R\left(
k\right)  \xi_{x,t}\left(  k\right)  ,
\]
we see that $i\kappa$ is a removable singularity for $\varphi_{x,t}$ and hence
by Corollary \ref{corollary on analytic fi}%
\[
\mathbb{H}\left(  \Phi_{x,t}\right)  =\mathbb{H}\left(  \varphi_{x,t}\right)
.
\]
Since $B^{\varepsilon}\rightarrow1$ a.e., it follows from (\ref{PB}) that in
the strong operator topology%
\begin{equation}
\mathbb{P}_{B^{\varepsilon}}^{\bot}=B^{\varepsilon}\mathbb{P}_{+}%
\overline{B^{\varepsilon}}\rightarrow I,\ \ \varepsilon\rightarrow
0.\label{conv of proj}%
\end{equation}
But \cite{BotSil02}, if $H_{n}\rightarrow H$ in trace norm, $A_{n}$ is
self-adjoint, $\sup_{n}\left\Vert A_{n}\right\Vert <\infty$, and
$A_{n}\rightarrow A$ strongly, then $A_{n}H_{n}A_{n}\rightarrow AHA$ in trace
norm. Therefore, we can conclude that in trace norm%
\begin{equation}
\mathbb{H}_{1}\rightarrow\mathbb{H}\left(  \varphi_{x,t}\right)
,\ \ \varepsilon\rightarrow0.\label{conv in trace}%
\end{equation}
We now make use of a well-known formula from matrix theory:%
\begin{equation}
\det\left(
\begin{array}
[c]{cc}%
A_{11} & A_{12}\\
A_{21} & A_{22}%
\end{array}
\right)  =\det A_{11}\det\left(  A_{22}-A_{21}A_{11}^{-1}A_{12}\right)
,\label{block mat}%
\end{equation}
which yields%
\begin{align}
&  \det\left(  I+\mathbb{H}\left(  \varphi_{x,t}^{\varepsilon}\right)
\right)  \label{det decom 0}\\
&  =\det\left\{  I+\mathbb{H}_{1}\right\}  \cdot\det\left\{  I+\mathbb{H}%
_{0}-\mathbb{H}_{01}^{\ast}\left(  I+\mathbb{H}_{1}\right)  ^{-1}%
\mathbb{H}_{01}\right\}  .\nonumber
\end{align}
Our goal is to study what happens to (\ref{det decom 0}) as $\varepsilon
\rightarrow0$. The determinants on the right hand side of (\ref{det decom 0})
behave very differently and we treat them separately. It follows from
(\ref{conv in trace}) that%
\begin{equation}
\lim_{\varepsilon\rightarrow0}\det\left\{  I+\mathbb{H}_{1}\right\}
=\det\left\{  I+\mathbb{H}\left(  \varphi_{x,t}\right)  \right\}
.\label{lim 1}%
\end{equation}
Turn now to the second determinant in (\ref{det decom 0}). It is clearly a
$3\times3$ determinant. We are going to show that, in fact, this determinant
vanishes as $O\left(  \varepsilon\right)  $. To this end, we explicitly
evaluate it in the basis $\left(  k_{z_{n}}\right)  $%
\begin{align}
&  \det\left\{  I+\mathbb{H}_{0}-\mathbb{H}_{01}^{\ast}\left(  I+\mathbb{H}%
_{1}\right)  ^{-1}\mathbb{H}_{01}\right\}  \label{Det}\\
&  =\det\left(
\begin{array}
[c]{ccc}%
1+h_{-1-1}+\overline{d_{11}} & h_{-10}+d_{-10} & h_{-11}+d_{-11}\\
h_{0-1}+d_{0-1} & 1+h_{00}+d_{00} & h_{01}+d_{01}\\
\overline{h_{-11}}+\overline{d_{-11}} & h_{10}+d_{10} & 1+\overline{h_{-1-1}%
}+d_{11}%
\end{array}
\right)  ,\nonumber
\end{align}
where $h_{mn}$ and $d_{mn}$ are the matrix entries of%
\[%
%TCIMACRO{\dsum _{-1\leq n\leq1}}%
%BeginExpansion
{\displaystyle\sum_{-1\leq n\leq1}}
%EndExpansion
i\operatorname*{Res}\left(  \xi_{x,t}R_{\varepsilon},z_{-n}\right)
\left\langle \cdot,k_{z_{-n}}\right\rangle k_{z_{n}}%
\]
and%
\[
\mathbb{P}_{B}\mathbb{H}\left(  \Phi_{x,t}^{\varepsilon}\right)
\mathbb{P}_{B}-\mathbb{H}_{01}^{\ast}\left(  I+\mathbb{H}_{1}\right)
^{-1}\mathbb{H}_{01}%
\]
respectively. By Lemma \ref{Blaschke lemma}%
\begin{align}
h_{mn} &  =\left\langle
%TCIMACRO{\dsum _{-1\leq j\leq1}}%
%BeginExpansion
{\displaystyle\sum_{-1\leq j\leq1}}
%EndExpansion
i\operatorname*{Res}\left(  \xi_{x,t}R_{\varepsilon},z_{-j}\right)
\left\langle k_{z_{n}},k_{z_{-j}}\right\rangle k_{z_{j}},k_{z_{m}}^{\bot
}\right\rangle \label{h_mn}\\
&  =i\operatorname*{Res}\left(  \xi_{x,t}R_{\varepsilon},z_{-m}\right)
\left\langle k_{z_{n}},k_{z_{-m}}\right\rangle =i\operatorname*{Res}\left(
\xi_{x,t}R_{\varepsilon},z_{-m}\right)  k_{z_{n}}\left(  z_{-m}\right)
\nonumber\\
&  =\frac{\xi_{x,t}\left(  z_{-m}\right)  \operatorname*{Res}\left(
R_{\varepsilon},z_{-m}\right)  }{\overline{z_{m}}+\overline{z_{n}}}.\nonumber
\end{align}
Incidentally, (\ref{h_mn}) implies $h_{1-1}=\overline{h_{-11}},\ h_{-1-1}%
=\overline{h_{11}}$. Recall that $z_{n}$ are chosen so that $P\left(
z_{n}\right)  -i\varepsilon=0$ if $n=\pm1$ and $P\left(  z_{n}\right)
+i\varepsilon=0$ if $n=0$. Rewriting (\ref{R_epsilon}) as%
\[
R_{\varepsilon}\left(  k\right)  =aR\left(  k\right)  \frac{P\left(  k\right)
+2i\rho}{P\left(  k\right)  +i\rho\left(  1+a\right)  }\frac{P\left(
k\right)  }{P\left(  k\right)  +i\rho\left(  1-a\right)  }\frac{1}%
{B^{\varepsilon}\left(  k\right)  },
\]
for the residues we then have
\begin{align*}
&  \operatorname*{Res}\left(  R_{\varepsilon},z_{n}\right)  \\
&  =aR\left(  z_{n}\right)  \frac{P\left(  z_{n}\right)  +2i\rho}{P\left(
z_{n}\right)  +i\rho\left(  1+a\right)  }\frac{P\left(  z_{n}\right)
}{P\left(  z_{n}\right)  +i\rho\left(  1-a\right)  }\frac{2i\operatorname{Im}%
z_{n}}{B_{n}^{\varepsilon}\left(  z_{n}\right)  }.
\end{align*}
One now readily verifies that%
\begin{align*}
\frac{P\left(  z_{n}\right)  +2i\rho}{P\left(  z_{n}\right)  +i\rho\left(
1+a\right)  } &  =1+O\left(  \varepsilon^{2}\right)  ,\ \\
\frac{P\left(  z_{n}\right)  }{P\left(  z_{n}\right)  +i\rho\left(
1-a\right)  } &  =1+\left(  -1\right)  ^{n}\frac{\varepsilon}{2\rho}+O\left(
\varepsilon^{2}\right)  ,\\
B_{n}^{\varepsilon}\left(  z_{n}\right)  ^{-1} &  =1+5in\varepsilon/2+O\left(
\varepsilon^{2}\right)  ,
\end{align*}
and thus%
\begin{align}
&  \operatorname*{Res}\left(  R_{\varepsilon},z_{n}\right)  \label{Res}\\
&  =2i\operatorname{Im}z_{n}\ R\left(  z_{n}\right)  \left[  1+\frac
{i\varepsilon}{2}\left(  5n+\left(  -1\right)  ^{n}\frac{1}{\rho}\right)
+O\left(  \varepsilon^{2}\right)  \right]  .\nonumber
\end{align}
Inserting (\ref{Res}) into (\ref{h_mn}) yields%
\[
h_{mn}=\frac{2i\operatorname{Im}z_{m}}{\overline{z_{m}}+\overline{z_{n}}%
}\ \left(  \xi_{x,t}R\right)  \left(  z_{-m}\right)  \left[  1+\frac
{i\varepsilon}{2}\left(  5m+\left(  -1\right)  ^{m}\frac{1}{\rho}\right)
+O\left(  \varepsilon^{2}\right)  \right]  .
\]
Observe, that $h_{n,m}=O\left(  \varepsilon\right)  $ if $n\not =-m$ and
$h_{m,-m}$ doesn't vanish as $\varepsilon\rightarrow0$ (which is an important
fact for what follows). As we will see, only $h_{-11}$ and $h_{11}$ matter.
Recalling that $R\left(  1\right)  =-1$ we have%
\begin{align}
&  h_{-11}\label{h_-11}\\
&  =\xi_{x,t}\left(  1\right)  \left\{  1-\frac{\varepsilon}{2}\left[
\frac{1}{\rho}+i\overline{\xi_{x,t}\left(  1\right)  }\left(  R\xi
_{x,t}\right)  ^{\prime}\left(  1\right)  +5i\right]  +O\left(  \varepsilon
^{2}\right)  \right\}  ,\nonumber
\end{align}%
\begin{equation}
h_{-1-1}=\frac{i\varepsilon}{2}\ \xi_{x,t}\left(  1\right)  \left[  1+O\left(
\varepsilon\right)  \right]  .\label{h_-1-1}%
\end{equation}
Similarly, for the matrix $\left(  d_{mn}\right)  $ we have
\begin{align}
d_{mn} &  =\left\langle \mathbb{H}\left(  \Phi_{x,t}^{\varepsilon}\right)
k_{z_{n}},k_{z_{m}}^{\bot}\right\rangle -\left\langle \left(  I+\ \mathbb{H}%
_{1}\right)  ^{-1}\mathbb{H}_{01}k_{z_{n}},\mathbb{H}_{01}k_{z_{m}}^{\bot
}\right\rangle \label{big d}\\
&  =\frac{2\operatorname{Im}z_{m}}{\overline{B_{m}^{\varepsilon}\left(
z_{m}\right)  }}\left\{  \left\langle \mathbb{H}\left(  \Phi_{x,t}%
^{\varepsilon}\right)  k_{z_{n}},B_{m}^{\varepsilon}k_{z_{m}}\right\rangle
\right.  \nonumber\\
&  -\left.  \left\langle \left(  I+\mathbb{H}_{1}\right)  ^{-1}\mathbb{H}%
_{01}k_{z_{n}},\mathbb{H}_{01}B_{m}^{\varepsilon}k_{z_{m}}\right\rangle
\right\}  .\nonumber\\
&  =\varepsilon D_{mn}+O\left(  \varepsilon\right)  ,\nonumber
\end{align}
where $D_{mn}$ will be computed later. For the determinant in (\ref{Det}) we
clearly have%
\begin{align}
&  \det\left\{  I+\mathbb{H}_{0}-\mathbb{H}_{01}^{\ast}\left(  I+\mathbb{H}%
_{1}\right)  ^{-1}\mathbb{H}_{01}\right\}  \label{Det 1}\\
&  =\left(  1+h_{00}+d_{00}\right)  \det\left(
\begin{array}
[c]{cc}%
1+h_{-1-1}+\overline{d_{11}} & h_{-11}+d_{-11}\\
\overline{h_{-11}}+\overline{d_{-11}} & 1+\overline{h_{-1-1}}+d_{11}%
\end{array}
\right)  +O\left(  \varepsilon^{2}\right)  \nonumber\\
&  =2\left\{  \left\vert 1+h_{-1-1}+\overline{d_{11}}\right\vert
^{2}-\left\vert h_{-11}+d_{-11}\right\vert ^{2}\right\}  +O\left(
\varepsilon^{2}\right)  \text{ (by (\ref{h_-1-1})-(\ref{big d}))}\nonumber\\
&  =2\left(  1-\left\vert h_{-11}\right\vert ^{2}+2\operatorname{Re}%
h_{-1-1}\right)  +2\varepsilon\operatorname{Re}\left[  \overline{D_{11}}%
-\xi_{x,t}\left(  1\right)  \overline{D_{-11}}\right]  +O\left(
\varepsilon^{2}\right)  .\nonumber
\end{align}
Evaluate each term in the right hand side of (\ref{Det 1}) separately. By
(\ref{h_-11})-(\ref{h_-1-1}) one has%
\begin{align}
&  1-\left\vert h_{-11}\right\vert ^{2}+2\operatorname{Re}h_{-1-1}%
\label{h part}\\
&  =\varepsilon\left\{  1/\rho+\operatorname{Re}i\overline{\xi_{x,t}\left(
1\right)  }\left[  \left(  R\xi_{x,t}\right)  ^{\prime}\left(  1\right)
-1\right]  +O\left(  \varepsilon\right)  \right\}  \nonumber\\
&  =\frac{2\varepsilon}{\rho}\left\{  1+\rho\left(  x+12t\right)  -\frac{\rho
}{2}\sin\left(  2x+8t\right)  +O\left(  \varepsilon\right)  \right\}
\nonumber
\end{align}
and%
\begin{align*}
D_{mn} &  =\lim_{\varepsilon\rightarrow0}\left\{  \left\langle \mathbb{H}%
\left(  \Phi_{x,t}^{\varepsilon}\right)  k_{z_{n}},B_{m}^{\varepsilon}%
k_{z_{m}}\right\rangle -\left\langle \left(  I+\ \mathbb{H}_{1}\right)
^{-1}\mathbb{H}_{01}k_{z_{n}},\mathbb{H}_{01}B_{m}^{\varepsilon}k_{z_{m}%
}\right\rangle \right\}  \\
&  =:D_{mn}^{\left(  1\right)  }+D_{mn}^{\left(  2\right)  }.
\end{align*}
Since $\mathbb{H}\left(  \Phi_{x,t}^{\varepsilon}\right)  $ is a self-adjoint
operator, by Corollary \ref{smooth elements} we have ($m=\pm1,n=1$)%
\begin{align}
\overline{D_{mn}^{\left(  1\right)  }} &  =\lim_{\varepsilon\rightarrow
0}\left\langle \mathbb{H}\left(  \Phi_{x,t}^{\varepsilon}\right)
B_{m}k_{z_{m}},k_{z_{n}}\right\rangle \label{d1}\\
&  =\lim_{\varepsilon\rightarrow0}\left.  \mathbb{H}\left(  \Phi
_{x,t}^{\varepsilon}\right)  B_{m}k_{z_{m}}\right\vert _{z_{n}}\text{ \ \ (by
(\ref{rep kernels}))}\nonumber\\
&  =K_{m}\left(  n\right)  ,\nonumber
\end{align}
where%
\[
K_{m}\left(  n\right)  =-\int_{\mathbb{R}+ih}\frac{\varphi_{x,t}\left(
z\right)  }{\left(  z-m\right)  \left(  z+n\right)  }\frac{dz}{2\pi}.
\]
Similarly, by (\ref{conv of proj}), (\ref{conv in trace}), and Corollary
\ref{smooth elements} we have%
\begin{equation}
\overline{D_{mn}^{\left(  2\right)  }}=-\left.  \mathbb{H}\left(
\varphi_{x,t}\right)  \left(  I+\mathbb{H}\left(  \varphi_{x,t}\right)
\right)  ^{-1}K_{m}\right\vert _{n+i0}.\label{d2}%
\end{equation}
Therefore, combining (\ref{d1}) and (\ref{d2}) we have%
\begin{align*}
\overline{D_{mn}} &  =K_{m}\left(  n\right)  -\left.  \mathbb{H}\left(
\varphi_{x,t}\right)  \left(  I+\mathbb{H}\left(  \varphi_{x,t}\right)
\right)  ^{-1}K_{m}\right\vert _{n+i0}\\
&  =\left.  \left(  I+\mathbb{H}\left(  \varphi_{x,t}\right)  \right)
^{-1}K_{m}\right\vert _{n+i0}.
\end{align*}
Substituting this and (\ref{h part}) into (\ref{Det 1}) yields%
\begin{align}
&  \frac{\rho}{4\varepsilon}\det\left\{  I+\mathbb{H}_{0}-\mathbb{H}%
_{01}^{\ast}\left(  I+\mathbb{H}_{1}\right)  ^{-1}\mathbb{H}_{01}\right\}
\nonumber\\
&  =1+\rho\left(  x+12t\right)  -\sin\left(  2x+8t\right)  \nonumber\\
&  +\frac{\rho}{2}\operatorname{Re}\left.  \left(  I+\mathbb{H}\left(
\varphi_{x,t}\right)  \right)  ^{-1}\left(  K_{1}-\xi_{x,t}\left(  1\right)
K_{-1}\right)  \right\vert _{1+i0}+O\left(  \varepsilon\right)
\label{piece of det}%
\end{align}
We have now prepared all the ingredients to find the solution to the KdV
equation with the initial data $Q_{\varepsilon}$ by the Dyson formula. Indeed,%
\begin{align}
Q_{\varepsilon}\left(  x,t\right)   &  =-2\partial_{x}^{2}\log\det\left\{
I+\mathbb{H}\left(  \varphi_{x,t}^{\varepsilon}\right)  \right\}  \text{ \ (by
(\ref{Det 1}))}\label{Q_epsilon}\\
&  =2\partial_{x}^{2}\log\det\left(  I+\mathbb{H}_{1}\right)  \nonumber\\
&  -2\partial_{x}^{2}\log\det\left\{  I+\mathbb{H}_{0}-\mathbb{H}_{01}^{\ast
}\left(  I+\mathbb{H}_{1}\right)  ^{-1}\mathbb{H}_{01}\right\}  \nonumber\\
&  =-2\partial_{x}^{2}\log\det\left\{  I+\mathbb{H}\left(  \varphi
_{x,t}\right)  \right\}  \text{ \ (by (\ref{lim 1}) and (\ref{piece of det}%
))}\nonumber\\
&  -2\partial_{x}^{2}\log\left\{  1+\rho\left(  x+12t\right)  -\frac{\rho}%
{2}\sin\left(  2x+8t\right)  \right.  \nonumber\\
&  +\left.  \frac{\rho}{2}\operatorname{Re}\left.  \left(  I+\mathbb{H}\left(
\varphi_{x,t}\right)  \right)  ^{-1}\left(  K_{1}-\xi_{x,t}\left(  1\right)
K_{-1}\right)  \right\vert _{1+i0}\right\}  +O\left(  \varepsilon\right)
.\nonumber
\end{align}
We are now able to fill the gap left in the proof of Theorem
\ref{Thm on approx}, i.e. (\ref{lim pot}). To this end, set $t=0$ in
(\ref{Q_epsilon}) and take $x>0$. In this case $\xi_{x,0}\in H^{\infty}$ and
hence $\varphi_{x,0}\in H^{\infty}$. Therefore, $\mathbb{H}\left(
\varphi_{x,t}\right)  =0$ and by the Lebesgue dominated convergence theorem
(or by Corollary \ref{smooth elements}) we also have%
\begin{align*}
K_{m}\left(  s\right)   &  =-\int_{\mathbb{R}+ih}\frac{\varphi_{x,0}\left(
z\right)  }{\left(  z-m\right)  \left(  z+s\right)  }\frac{dz}{2\pi}\\
&  =-\lim_{h\rightarrow\infty}\int_{\mathbb{R}+ih}\frac{\varphi_{x,0}\left(
z\right)  }{\left(  z-m\right)  \left(  z+s\right)  }\frac{dz}{2\pi}=0.
\end{align*}
Eq. (\ref{Q_epsilon}) simplifies now to read%
\[
Q_{\varepsilon}\left(  x,0\right)  =-2\partial_{x}^{2}\log\left(  1+\rho
x-\frac{\rho}{2}\sin2x\right)  +O\left(  \varepsilon\right)  ,\ \ \ x>0.
\]
Recalling (\ref{Q}), we conclude that $Q_{\varepsilon}\left(  x\right)
=Q_{\varepsilon}\left(  x,0\right)  \rightarrow q_{0}\left(  x\right)  $ for
$x>0$. Since $Q_{\varepsilon}\left(  x\right)  $ is even, (\ref{lim pot}) follows.

Pass now in (\ref{Q_epsilon}) to the limit as $\varepsilon\rightarrow0$.
Apparently,
\begin{align*}
\lim_{\varepsilon\rightarrow0}Q_{\varepsilon}\left(  x,t\right)   &
=-2\partial_{x}^{2}\log\det\left\{  I+\mathbb{H}\left(  \varphi_{x,t}\right)
\right\}  \text{ \ }\\
&  -2\partial_{x}^{2}\log\left\{  1+\rho\left(  x+12t\right)  -\frac{\rho}%
{2}\sin\left(  2x+8t\right)  \right. \\
&  +\left.  \frac{\rho}{2}\operatorname{Re}\left.  \left(  I+\mathbb{H}\left(
\varphi_{x,t}\right)  \right)  ^{-1}\left(  K_{1}-\xi_{x,t}\left(  1\right)
K_{-1}\right)  \right\vert _{1+i0}\right\}  .
\end{align*}
By the Bourgain theorem $Q\left(  x,t\right)  =\lim_{\varepsilon\rightarrow
0}Q_{\varepsilon}\left(  x,t\right)  $ is the (unique) solution to the KdV
equations with data $Q\left(  x\right)  $. Recalling Corollary
\ref{smooth elements}, we see that
\[
K_{n}=\mathbb{H}(\varphi_{x,t})k_{n+i0},\ \ \ n=\pm1.
\]
This completes the proof of the theorem.
\end{proof}

Note that the first term $u_{0}\left(  x,t\right)  $ in the solution (\ref{u})
is given by the same Dyson formula (\ref{u0}) as in the short-range case but
of course $u_{0}\left(  x,0\right)  $ is not a short range potential. The
second term $u_{1}\left(  x,t\right)  $ in (\ref{u}) is responsible for the
bound state $+1$ and if $\rho=1$ it resembles the so-called positon solution%
\begin{equation}
u_{\text{pos}}\left(  x,t\right)  =-2\partial_{x}^{2}\log\left\{
1+x+12t-\frac{1}{2}\sin2\left(  x+4t\right)  \right\}  .
\label{tau for positon}%
\end{equation}
Such solutions seem to have appeared first in the late 70s earlier 80s but a
systematic approach was developed a decade later by V. Matveev (see his 2002
survey \cite{Mat02}).

The formula (\ref{tau for positon}) readily yields basic properties of
one-position solutions. (1) As a function of the spatial variable
$u_{\text{pos}}\left(  x,t\right)  $ has a double pole real singularity which
oscillates in the $1/2$ neighborhood of the moving point $x=-12t-1$. (2) For a
fixed $t\geq0$%
\begin{equation}
u_{\text{pos}}\left(  x,t\right)  =-4\frac{\sin2\left(  x+4t\right)  }%
{x}+O\left(  x^{-2}\right)  ,\ \ \ x\rightarrow\pm\infty\text{.}%
\label{posit asym}%
\end{equation}
Observe that%
\[
u_{\text{pos}}\left(  x,0\right)  =-2\partial_{x}^{2}\log\left(  1+x-\frac
{1}{2}\sin2x\right)  ,
\]
which coincides on $\left(  0,\infty\right)  $ with our $Q\left(  x\right)  $
for $\rho=1$. Moreover, comparing (\ref{Q asym}) with (\ref{posit asym}) one
can see that the asymptotic behaviors for $x\rightarrow-\infty$ of our
$Q\left(  x\right)  $ with $\rho=1$ and $u_{\text{pos}}\left(  x,0\right)  $
differ only by $O\left(  x^{-2}\right)  $. But, of course, $Q\left(  x\right)
$ is bounded on $\left(  -\infty,0\right)  $ while $u_{\text{pos}}\left(
x,0\right)  $ is not. Note also that the positon is somewhat similar to the
soliton given by%
\begin{equation}
u_{\text{sol}}\left(  x,t\right)  =-2\partial_{x}^{2}\log\cosh\left(
x-4t\right)  .\label{tau for soliton}%
\end{equation}
As opposed to the soliton, the positon has a square singularity (not a smooth
hump) moving in the opposite direction three times as fast.

We note that multi-positon as well as soliton-positon solutions have been
studied in great detail (see \cite{Mat02} the references cited therein). In
\cite{Mat02} Matveev also raises the equation if there is a bounded positon,
i.e. a solution having all properties of a positon but is regular. We are
unable to tell if our solution is a bounded positon or not.

\end{document}